\newcommand{\bea}{\begin{eqnarray*}}
\newcommand{\eea}{\end{eqnarray*}}
\newcommand{\be}{\begin{eqnarray}}
\newcommand{\ee}{\end{eqnarray}}
\def\dim{{\rm {dim}}}
\def\1b{{\mathbf 1}}
\def\0b{{\mathbf 0}}
\newtheorem{rmk}{Remark}
\newtheorem{thm}{Theorem}
\newtheorem{lemma}{Lemma}
\newtheorem{propo}{Proposition}
\newtheorem{defn}{Definition}
\newtheorem{coro}{Corollary}
\theoremstyle{definition}
\newtheorem{ex}{Example}
\begin{document}

\title{Circuits for robust designs}

\author[1]{Roberto Fontana}
\author[2]{Fabio Rapallo}
\author[3]{Henry P. Wynn}
\affil[1]{Department of Mathematical Science, Politecnico di Torino, Italy\footnote{{\tt roberto.fontana@polito.it}}}
\affil[2]{Department of Economics, University of Genova, Italy\footnote{{\tt fabio.rapallo@unige.it}}}
\affil[3]{London School of Economics, UK\footnote{{\tt H.Wynn@lse.ac.uk}}}

\date{}

\maketitle

\begin{abstract}
This paper continues the application of circuit theory to experimental design started by the first two authors. The theory gives a very special and detailed representation of the kernel of the design model matrix. This representation turns out to be an appropriate way to study the optimality criteria referred to as robustness: the sensitivity of the design to the removal of design points. Many examples are given, from classical combinatorial designs to two-level factorial design including interactions. The complexity of the circuit representations are useful because the large range of options they offer, but conversely require the use of dedicated software. Suggestions for speed improvement are made.

\medskip

{\it{Keywords:}} Algebraic Statistics and combinatorics; Design of Experiments; Robustness
\end{abstract}

\section{Introduction}
In Design of Experiments, Fractional Factorial Designs are frequently used in many fields of application, including medicine, engineering and agriculture. They offer a valuable tool for dealing with problems where there are many factors involved and each run is expensive. The literature on the subject is extremely rich. A non-exhaustive list of references includes \cite{mukerjee2007modern}, \cite{dey2009fractional}, \cite{hedayat2012orthogonal}, \cite{bailey2008design}.

When searching for an \emph{optimal} experimental designs, we aim to select a design in order to produce the best estimates of the relevant parameters for a given sample size. The are many criteria for choosing an optimal design for the problem under study. A possible classification divides such criteria into two classes: \emph{model-free} and \emph{model-based} criteria. An example of model-free criterion is the minimization of the size of an orthogonal array of a given strength. Examples of model-based criteria include alphabetical design criteria (among these $D$-optimality is one of the most commonly used in applications).

In this work we focus on the model-based setting and we consider the notion of \emph{robustness} of a design. Although most of the examples will concern Fractional Factorial Designs for linear models, the theory developed in this paper is quite general, and some pointers and an example regarding polynomial models are described in Sect.~\ref{sect:circandrob}. The notion of robustness is particularly important when at the end of the experimental activity the design may be incomplete i.e. the response values are not available for all the points of the design itself.  Fractional Factorial Designs with removed runs are studied in, e.g. \cite{butler07}, \cite{street}, \cite{xampeny} and with combinatorial analysis in \cite{fontana|rapallo:19}, but in a model-free context.
Following Ghosh \cite{ghosh:79}, \cite{ghosh:82}, \cite{dee:93}, we consider robustness in terms of the estimability of a given model on the basis of incomplete designs.  We will see that this definition of robustness is also related, but not equivalent, to $D$-optimality.

The model-design pair determines the design matrix. We study a combinatorial object derived from the design matrix  namely the \emph{circuit basis} of the design matrix. We analyze the behavior of the circuits of the design matrix for sub-matrices. This results in sub-fractions (when removing rows) or in super-models (when removing columns). A connection between the circuits and the estimability of saturated designs has been investigated in \cite{fontana|etal:14}.

From the results of this analysis we derive a greedy algorithm to find robust designs by checking the intersections of the design with the supports of the circuits. Moreover, we define a simplified version of this algorithm based on a subset of circuits that can be theoretically characterized (i.e., without symbolic computation) for most cases of factorial designs. The advantage of the simplified version of the algorithm is that it can work in higher dimensions that the standard algorithm. We perform a simulation study where several examples are illustrated to prove the effectiveness of the algorithm. The use of greedy algorithms for $D$-optimality, but in a different context, have been used recently also in \cite{harman:20a} and \cite{harman:20b}.

The paper is organized as follows. In Section \ref{sect:definition} we define the robustness and we obtain its relation with $D$-optimality in the case of totally unimodular design matrices. Some definitions and a few basic facts concerning the algebraic and combinatorial properties of the model matrices are presented in Section \ref{sect:algebraic}. In Section \ref{sect:circandrob} we study the connections between the robustness of a fraction and the circuits contained in the fraction itself. The algorithms and the simulation study are described in Section \ref{sect:algo}. Final remarks are in Section \ref{sect:final}.

\section{Factorial designs, optimal design, and robustness} \label{sect:definition}

In this section, we summarize the basic definitions and notation about the notion of optimality of fractional factorial designs and introduce the definition of robustness.

We adopt here a \emph{candidate set} approach to experimental design. Let ${\mathcal D}$ be a large discrete set in ${\mathbb R}^m$ from which a small set ${\mathcal F}$, usually referred to as design or fraction, is to be selected. We will also consider the general case of fractions with replicates. In this case the fraction ${\mathcal F}$ is a multiset and $\mathcal{D}$ is its underlying set.

One standard example is to consider as candidate set ${\mathcal D}$ the cartesian product of the level sets of the $m$ factors. The set ${\mathcal D}$, when thought of as a design in its own right, is referred to as a full factorial.

We point out that in our theory the coding of the level set is irrelevant, so that for the level set of a factor with $s$ levels we can use  $\{0,\ldots, s-1\}$ or the complex coding
\[
\left\{ \exp \left( \frac {2 \pi i k} {s} \right) \ : k=0,\ldots, s-1 \right\} \,
\]
or any other coding that is considered appropriate. For binary factors ($s=2$) we observe that the complex coding corresponds to choose $\{-1,1\}$ as level sets.

In general, the problem of finding \emph{optimal designs} can be stated as follows. Given a \emph{big set} ${\mathcal D}$ with $K$ design points and a linear model on ${\mathcal D}$, choose an \emph{optimal small set} ${\mathcal F}$ with $n$ design points. Let us denote with ${\bf y}$ the vector of the response variable. We consider a linear model on ${\mathcal D}$ of the form
\begin{equation} \label{classmod}
{\bf y} = X_{\mathcal D} \boldsymbol{\beta} + \boldsymbol{\varepsilon} \, ,
\end{equation}
where $X_{\mathcal D}$ is the full-design model matrix with dimensions $K \times p$, $\boldsymbol{\beta}$ is the $p$-dimensional vector of the parameters, ${\mathbb E}({\bf y}) = X_{\mathcal D} \boldsymbol{\beta}$, and the usual assumptions on the variance hold: ${\mathbb V}({\bf y}) = \sigma^2 I_K$, where $I_K$ is the identity matrix with dimension $N$. Moreover, we assume that the full-design model matrix $X_{\mathcal D}$ is full rank, i.e. its rank is $p$. Although this last assumption is not strictly necessary for the validity of our results, nevertheless it  makes easier theorem statements and proofs, so we work under the full-rank assumption without loss of generality.

For instance, in a two-factor design, first factor $A_1$ with level set $\{0,\ldots , s_1-1\}$ and second factor $A_2$ with level set $\{0,\ldots, s_2-1\}$, under the simple effect model we have $p=s_1+s_2-1$ and a possible design matrix is:
\begin{equation}   \label{mat-repr}
X_{\mathcal D}= \left( {\bf m}_0 \ | \ {\bf a}_0 \ | \ \ldots \ | \ {\bf a}_{s_1-2} \ | \ {\bf b}_0 \ | \
\ldots \ | \ {\bf b}_{s_2-2}  \right) \, ,
\end{equation}
where ${\bf m}_0$ is a column vector of $1$'s, ${\bf a}_0, \ldots, {\bf a}_{s_1-2}$
are the indicator vectors of the first $(s_1-1)$ levels of the
factor $A_1$, and ${\bf b}_0, \ldots, {\bf b}_{s_2-2}$ are the indicator vectors
of the first $(s_2-1)$ levels of the factor $A_2$.

Let us now reconsider the model in Eq.~\eqref{classmod} under the point of view of Polynomial Algebra. From the design ${\mathcal F}$ and given a statistical model with $p$ parameters, we can write the design matrix $X_{\mathcal F}$. A model here is typically a polynomial function
\begin{equation}\label{polyeta}
\boldsymbol{\eta} = {\mathbb E}({\bf y}) = \sum_{\boldsymbol{\alpha}  \in L} c_{\boldsymbol{\alpha}} {\bf x}^{\boldsymbol{\alpha}},
\end{equation}
where we have used the monomial notation: $ \alpha = (\alpha_1, \ldots, \alpha_m)$ and   $x_1^{\alpha_1} x_2^{\alpha_2} \cdots    x_m ^{\alpha_m}$ for a point ${\bf x} = (x_1,x_2,\ldots, x_m)$ in ${\mathbb R}^m$.

The notation $L$, meaning a list of integer exponents, is a convenient way to summarize the model and we shall refer to the models basis $\{x^{\alpha}, \alpha \in L \}$. The design-model pair $({\mathcal F},L)$ gives a design matrix
\begin{equation}\label{polymat}
X_{\mathcal F} = \{ {\bf x}^{\boldsymbol{\alpha}}\}_{{\bf x} \in {\mathcal F}, \boldsymbol{\alpha} \in L } \, .
\end{equation}

Notice that the definition in Eqs.~\eqref{polyeta} and \eqref{polymat} still holds when ${\mathcal F} = {\mathcal D}$, and that for a fixed design the matrix $X_{\mathcal F}$ is obtained from ${\mathcal D}$ simply by selection of the appropriate rows of ${\mathcal D}$. In case ${\mathcal F}$ contains replicates  the matrix $X_{\mathcal F}$ is obtained from ${\mathcal D}$ by replication of the appropriate rows of ${\mathcal D}$.

In the model-based approach to experimental design, the quality of the chosen design ${\mathcal F}$ is expressed by some properties of $X_{\mathcal F}$. We shall be particulary interested in the D-optimality of the design ${\mathcal F}$, as a proper subset or eventually as a subset with replicates, of the candidate set $\mathcal D$. As described in \cite{sas2004sas}, D-optimality is based on the determinant of the information matrix for the design, which is the same as the reciprocal of the determinant of the variance-covariance matrix for the least squares estimates of the linear parameters of the model. The D-efficiency of a design $\mathcal{F}$ with design matrix $X_\mathcal{F}$ is defined as
\begin{equation} \label{defficiency}
D(X_\mathcal{F}) =100\times \left( \frac{ \det(X_\mathcal{F}^t X_\mathcal{F})^{1/p}}{n} \right)
 \end{equation}
where $p$ is the number of parameters in the linear model, and $n$ is the number of design points, $n=\#\mathcal{F}$. The D-efficiency is the relative number of runs (expressed as percentages) that are required by a hypothetical orthogonal design to achieve the same  $\det(X_\mathcal{F}^t X_\mathcal{F})$, \cite{mitchell1974computer}.

Given  a design $\mathcal{F}$ we will compare its D-optimality with its \emph{robustness}. For a design $\mathcal{F}$ to be full rank we must have $n \geq p$. Let us suppose that $n > p$ but that for some unexpected reasons $n-p$ points of $\mathcal{F}$ are lost. We would obtain a size $p$ design that we denote by $\mathcal{F}_p$. The corresponding design matrix $X_{\mathcal{F}_p}$ could be full rank and then the $p$ parameters are estimable or not. When $X_{\mathcal{F}_p}$ is full rank the corresponding design $\mathcal{F}_p$ is said to be saturated. We define the robustness of a fraction $\mathcal{F}$ as the ratio between the number of saturated fractions and the total number of fractions of size $p$ which are obtained by removing $n-p$ points from $\mathcal{F}$.

\begin{defn}[Robustness]
We define the {\emph robustness} of a fraction ${\mathcal F}$  with design matrix $X_{\mathcal F} $as
\begin{equation} \label{def:robustness}
r(X_{\mathcal F}) = \frac {\# \mbox{ saturated } \ {\mathcal F}_p } {\# {\mathcal F}_p}= \frac {\# \mbox{ saturated } \ {\mathcal F}_p } {{n \choose p}} \, .
\end{equation}
\end{defn}

In particular a design $\mathcal{F}$ is \emph{robust} if its robustness is equal $1$, $r(X_{\mathcal F}) =1$.
Now we recall the definition of totally unimodular matrices and prove that for totally unimodular matrices $X_{\mathcal F}$ the definitions of robustness in Eq.~\eqref{def:robustness} and D-efficiency in Eq.~\eqref{defficiency} are equivalent. This follows from the Cauchy-Binet lemma from Linear Algebra.

\begin{lemma}[Cauchy-Binet]
Let $X_p$ be a $p \times p$ sub-matrix of $X$. Then:
\begin{equation*}
\det(X^t X) = \sum \det(X_p^t X_p) = \sum \det(X_p)^2 \, ,
\end{equation*}
where the sum extends over all $p \times p$ sub-matrices of $X$.
\end{lemma}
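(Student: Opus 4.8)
The plan is to derive this as a special case of the general Cauchy--Binet expansion for the product of a wide matrix and a tall matrix, and then to exploit the symmetry coming from the fact that the two factors here are transposes of one another. Here $X$ has $n$ rows and $p$ columns with $n \geq p$, so $X^t X$ is $p \times p$, and a generic $p \times p$ submatrix $X_p$ is obtained by selecting $p$ of the $n$ rows. Concretely, I would apply the general identity to $A = X^t$ (of size $p \times n$) and $B = X$ (of size $n \times p$), namely
\[
\det(AB) = \sum_{S} \det(A_S)\,\det(B_S) \, ,
\]
where the sum ranges over all $p$-element subsets $S \subseteq \{1,\dots,n\}$, $A_S$ denotes the $p \times p$ matrix formed by the columns of $A$ indexed by $S$, and $B_S$ the $p \times p$ matrix formed by the rows of $B$ indexed by $S$.

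To establish this general identity I would argue by multilinearity of the determinant in the columns. Writing $C = AB$, each column satisfies $C_{\cdot j} = \sum_{k=1}^n B_{kj}\,A_{\cdot k}$, i.e.\ is a linear combination of the columns of $A$; expanding $\det(C)$ multilinearly in its $p$ columns yields
\[
\det(C) = \sum_{k_1,\dots,k_p=1}^{n} B_{k_1 1}\cdots B_{k_p p}\,\det\bigl(A_{\cdot k_1},\dots,A_{\cdot k_p}\bigr) \, .
\]
Since a determinant with a repeated column vanishes, only the injective index tuples survive; grouping these by their underlying set $S = \{k_1<\dots<k_p\}$ and summing over the permutation $\sigma \in S_p$ that reorders them, antisymmetry of the determinant replaces the reordered determinant by $\mathrm{sgn}(\sigma)\,\det(A_S)$. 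The factor $\det(A_S)$ then comes out of the inner sum, and the residual sum $\sum_\sigma \mathrm{sgn}(\sigma)\,B_{k_{\sigma(1)}1}\cdots B_{k_{\sigma(p)}p}$ is exactly the Leibniz expansion of $\det(B_S)$, giving the claimed formula. The one step that requires care --- the main obstacle, such as it is --- is precisely this passage from the sum over all $n^p$ index tuples to the sum over $p$-subsets: one must correctly track the sign $\mathrm{sgn}(\sigma)$ produced by antisymmetry and recognise the resulting signed sum as $\det(B_S)$ rather than some permuted variant.

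Finally I would specialise to $A = X^t$, $B = X$. Selecting the columns of $X^t$ indexed by $S$ is the same as selecting the rows of $X$ indexed by $S$ and transposing, so $(X^t)_S = (X_S)^t$ and hence $\det\bigl((X^t)_S\bigr) = \det(X_S)$, where $X_S = X_p$ is the $p \times p$ row-submatrix. The general identity therefore collapses to $\det(X^t X) = \sum_S \det(X_S)^2$, which is the right-hand equality. The middle equality is then immediate, since for the square matrix $X_S$ the multiplicativity of the determinant gives $\det(X_S^t X_S) = \det(X_S^t)\det(X_S) = \det(X_S)^2$, and summing over $S$ completes the proof.
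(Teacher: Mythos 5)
Your proof is correct. Note, however, that the paper itself offers no proof of this lemma at all: it is quoted as a classical fact from linear algebra (with the reference role effectively played by the surrounding text, which only says the subsequent proposition ``follows from the Cauchy--Binet lemma''), so there is no argument in the paper to compare yours against. What you have supplied is the standard self-contained derivation: the general identity $\det(AB)=\sum_S \det(A_S)\det(B_S)$ via multilinearity of the determinant in the columns of $AB$, with the key bookkeeping step --- discarding non-injective index tuples, grouping the injective ones by their underlying $p$-subset $S$, and recognising the signed inner sum $\sum_\sigma \mathrm{sgn}(\sigma)\,B_{k_{\sigma(1)}1}\cdots B_{k_{\sigma(p)}p}$ as the Leibniz expansion of $\det(B_S)$ --- handled correctly, followed by the specialisation $A=X^t$, $B=X$, where $(X^t)_S=(X_S)^t$ gives the squared minors, and $\det(X_S^t X_S)=\det(X_S)^2$ gives the middle equality. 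This fills a gap the authors deliberately left to the literature, and it does so with the most elementary available argument (no exterior algebra, no block-matrix tricks), which is a reasonable choice given that the paper only needs the statement for integer matrices of full column rank.
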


\begin{defn}
A matrix $X$ is totally unimodular if every square sub-matrix $X_p$ has determinant $0$, $+1$, or $-1$. In particular, this implies that all entries are $0$ or $\pm 1$.
\end{defn}

\begin{propo} \label{deffrob}
Let us consider a fraction ${\mathcal F}$ with design matrix $X_{\mathcal F}$. If the design matrix $X_{\mathcal F}$ is totally unimodular then the relation between the robustness $R(X_{\mathcal F})$ and the D-efficiency $D(X_{\mathcal F})$ is
\begin{eqnarray*}
r(X_{\mathcal F}) = \frac{\left( n  D(X_{\mathcal F}) /100\right)^p} {{n \choose p}} \, ; \\
D(X_{\mathcal F}) =  \frac{100 \left({n \choose p}  r(X_{\mathcal F}) \right)^{1/p} }{n} \, .
\end{eqnarray*}
and in particular $D$-optimality is equivalent to maximum robustness.
\end{propo}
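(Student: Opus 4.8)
The plan is to collapse both optimality measures onto a single counting identity supplied by the Cauchy--Binet lemma. First I would observe that, since $X_{\mathcal F}$ has exactly $p$ columns and $n$ rows, every $p \times p$ sub-matrix of $X_{\mathcal F}$ is obtained by selecting $p$ of its $n$ rows (all $p$ columns are retained). These selections are in bijection with the sub-fractions $\mathcal{F}_p$ formed by keeping $p$ of the $n$ design points, so there are exactly ${n \choose p}$ such sub-matrices, one for each $\mathcal{F}_p$.

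The key step exploits total unimodularity. For each sub-matrix $X_p = X_{\mathcal{F}_p}$ we have $\det(X_p) \in \{0, +1, -1\}$, hence $\det(X_p)^2 \in \{0, 1\}$, with $\det(X_p)^2 = 1$ exactly when $X_{\mathcal{F}_p}$ is full rank, i.e. when $\mathcal{F}_p$ is saturated. Summing over all sub-fractions and invoking Cauchy--Binet therefore gives
\[
\det(X_{\mathcal{F}}^t X_{\mathcal{F}}) = \sum_{\mathcal{F}_p} \det(X_{\mathcal{F}_p})^2 = \# \mbox{ saturated } \mathcal{F}_p = {n \choose p}\, r(X_{\mathcal{F}}),
\]
where the final equality is simply the definition of robustness in Eq.~\eqref{def:robustness}.

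What remains is elementary algebra. Substituting this determinant into the D-efficiency formula \eqref{defficiency} gives $\left(n\, D(X_{\mathcal{F}})/100\right)^p = \det(X_{\mathcal{F}}^t X_{\mathcal{F}}) = {n \choose p}\, r(X_{\mathcal{F}})$; isolating $r(X_{\mathcal F})$ produces the first displayed identity, and taking $p$-th roots and solving for $D(X_{\mathcal F})$ produces the second. For the optimality statement I would note that each of the two relations expresses one quantity as a strictly increasing function of the other (because $n, p > 0$ and $t \mapsto t^{1/p}$ is increasing on the nonnegative reals), so a fraction maximizes $D(X_{\mathcal{F}})$ over the candidate designs if and only if it maximizes $r(X_{\mathcal{F}})$.

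I do not expect a real obstacle. The only delicate points are the bijection between $p \times p$ sub-matrices and size-$p$ sub-fractions, and the observation that total unimodularity reduces each squared minor to an indicator of saturation; once these are secured, the identity follows at once from Cauchy--Binet and the rest is rearrangement.
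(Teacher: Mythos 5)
Your proposal is correct and follows essentially the same route as the paper's proof: Cauchy--Binet reduces $\det(X_{\mathcal F}^t X_{\mathcal F})$ to a sum of squared $p\times p$ minors, total unimodularity turns each squared minor into an indicator of saturation, and the identities follow by substitution into the definition of $D$-efficiency. Your explicit treatment of the bijection between $p\times p$ sub-matrices and size-$p$ sub-fractions, and of the monotonicity underlying the equivalence of the two optimality criteria, merely spells out steps the paper leaves implicit.
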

\begin{proof}
Using the Cauchy-Binet lemma we can write
\begin{equation*}
\det(X_{\mathcal F}^t X_{\mathcal F} ) = \sum \det(X_p)^2 \, ,
\end{equation*}
where the summation is extended to all the $p \times p$ sub-matrices  $X_p$ of $X_{\mathcal F}$.

For totally unimodular matrices  $\det(X_p)^2 \in \{0,1\}$ and then
$\sum \det(X_p)^2$ is the number of saturated fractions $\mathcal{F}_p$ contained in ${\mathcal F}$. It follows that the robustness of the fraction ${\mathcal F}$ with design matrix $X_{\mathcal F}$ can be written as
\[
r(X_{\mathcal F}) =  \frac {\det(X_{\mathcal F}^t X_{\mathcal F} )} {{n \choose p}}.
\]
Finally, from the definition of $D$-efficiency in Eq. \eqref{defficiency}, the result follows.
\end{proof}

\begin{ex}[Balanced Incomplete Block Design] \label{ex:BIBD1}
Let us consider an example with two factors, $A$ with level set $\{1,2,3,4\}$ and $B$ with level set $\{1,2,3,4,5,6\}$. As candidate set $\mathcal{D}$ we choose an Orthogonal Array of size $12$ and strength $1$. If the levels of $A$ ($B$) represent the rows (the columns) of a table we can represent the points of $\mathcal{D}$ as the \emph{bullets} in Fig.~\ref{fig:bibd}. The design $\mathcal{D}$ can also be seen as a Balanced Incomplete Block Design (BIBD) with $t=4$ treatments, $b=6$ blocks, $k=2$ treatments in each block and $\lambda=1$ (i.e. each pair of treatments occurs together $\lambda=1$ time within a block).

\begin{figure}
\begin{center}
\begin{tabular}{|c|c|c|c|c|c|} \hline
$\bullet$ & $\bullet$ & $\bullet$ & & &  \\ \hline
$\bullet$ & & & $\bullet$ & $\bullet$ &  \\ \hline
& $\bullet$ & & $\bullet$ & & $\bullet$  \\ \hline
& & $\bullet$ & & $\bullet$ & $\bullet$  \\ \hline
\end{tabular}
\caption{The candidate set $\mathcal{D}$ for Example \ref{ex:BIBD1}.} \label{fig:bibd}
\end{center}
\end{figure}
One possible full rank design matrix for the main effects model is
\begin{equation}   \label{xd}
X_{\mathcal D}= \left( {\bf m}_0 \ | \ {\bf a}_1 \ | \ \ldots \ | \ {\bf a}_{3} \ | \ {\bf b}_1 \ | \
\ldots \ | \ {\bf b}_{5}  \right) \, ,
\end{equation}
where ${\bf m}_0$ is a column vector of $1$'s, ${\bf a}_1, \ldots, {\bf a}_{3}$
are the indicator vectors of the first $3$ levels of the
factor $A$, and ${\bf b}_1, \ldots, {\bf b}_{5}$ are the indicator vectors
of the first $5$ levels of the factor $B$. The rank of the matrix $X_{\mathcal D}$ is $p=9$.

It can be proved that $X_{\mathcal D}$ as defined in Eq. \eqref{xd} is totally unimodular, \cite{schrijver:86}. It follows that any sub-matrix $X_\mathcal{F}$ of $X_{\mathcal D}$ obtained by selecting $n$ rows from $X_{\mathcal D}$ with $n=9,10,11$ is totally unimodular. Then from Proposition \ref{deffrob} robustness and D-efficiency are equivalent for all the fractions of $\mathcal{D}$.

Let us suppose that we want to find robust sub-fractions with $n=10$ runs. By simply checking all the ${12 \choose 10}=66$ fractions of ${\mathcal D}$ we find $6$ fractions that do not allow estimability of the model (the rank of the design matrix is less than $p=9$), $48$ fractions with robustness equal to $0.6$ and $12$ fractions with robustness equal to $0.8$.
\end{ex}

Except from the special case of totally unimodular matrices, robustness and D-efficiency are not related as in Proposition \ref{deffrob}. In the next sections we will explore the relationship between robustness and D-efficiency in different scenarios.

\section{Circuits and their properties} \label{sect:algebraic}

In order to give a complete account of our theory and to present our algorithms with full details, some definitions and a few basic facts concerning the algebraic and combinatorial properties of the model matrices are needed. Thus, in the first part of this section, we recall some definitions and results from Algebraic Statistics. The interested reader can find a detailed presentation in \cite{pistone|etal:01}. As a general reference for Commutative Algebra we refer to \cite{cox|etal:07}.

Let us consider a design matrix on a set of $K$ design points. For instance, such a set can be the full factorial design, ${\mathcal D}$, and in this case $K = N$, but the theory is not limited to full factorial designs. Let $X=X_{\mathcal F}$ be a model matrix on ${\mathcal F}$, and assume that $X$ has integer entries. To simplify the notation, we drop the subscript ${\mathcal F}$ if there is no ambiguity. The matrix $X$ has dimensions $K \times p$. Moreover, in order to match the common notation in Statistics with the notation in Commutative Algebra, we consider the matrix $A=X^T$, the transpose of the model matrix.

Given a $p \times K$ integer matrix $A$, we define the {\emph polynomial ring} ${\mathbb R}[\mathbf{x}] = {\mathbb R}[x_1, \ldots, x_K]$ of all polynomials with indeterminates $x_1,\ldots, x_K$ with real coefficients, i.e., we define an indeterminate for each element of ${\bf y}$ or, equivalently, for each point of the design. An ideal ${\mathcal I}$ in ${\mathbb R}[\mathbf{x}]$ is a subset of ${\mathbb R}[\mathbf{x}]$ such that $f+g \in {\mathcal I}$ for all $f,g \in {\mathcal I}$ and $fg \in {\mathcal I}$ for all $f \in {\mathcal I}$ and for all $g \in {\mathbb R}[\mathbf{x}]$. The ideal generated by the polynomials $f_1, \ldots , f_r$ is the ideal
\[
{\mathcal I}(f_1, \ldots, f_r) = \langle f_1, \ldots, f_r \rangle = \{g_1f_1 + \ldots g_rf_r \ : \ g_1, \ldots, g_r \in {\mathbb R}[\mathbf{x}] \} \, .
\]
A classical result in polynomial algebra, namely the Hilbert basis theorem, ensures that every ideal in ${\mathbb R}[\mathbf{x}]$ is finitely generated.

The \emph{toric ideal} defined by $A$ is the binomial ideal (i.e., an ideal generated by binomials)
\begin{equation*}
{\mathcal I}(A) = \langle \mathbf{x}^\mathbf{a} - \mathbf{x}^\mathbf{b} \ : \ A\mathbf{a} = A\mathbf{b} \rangle
\end{equation*}
where the monomials $\mathbf{x}^\mathbf{a}$ are written in vector notation $\mathbf{x}^\mathbf{a}= x_1^{a_1} \cdots x_K^{a_K}$.

%\begin{defn}
%Given a term-order $\tau$ on the set of the terms in ${\mathbb R}[\mathbf{x}]$, each ideal ${\mathcal I}$ has a unique reduced Gr\"obner basis ${\mathcal G}_\tau({\mathcal I})$, i.e., a finite set of generators of the ideal $\mathcal I$ such that
%\[
%LT_\tau({\mathcal I}) = \langle LT_\tau(f) \ : \ f \in {\mathcal G}_\tau({\mathcal I}) \rangle \, .
%\]
%\end{defn}
%
%\begin{defn}
%The {\bf Universal Gr\"obner basis} ${\mathcal U}({\mathcal I})$ of the ideal ${\mathcal I}$ is the union of all its Gr\"obner bases.
%\end{defn}

In our examples we label the design points lexicographically for convenience. For instance in the $2^4$ case we define the indeterminates as
follows: $(-1,-1,-1,-1) \mapsto x_1$, $(-1,-1,-1,1) \mapsto x_2$, $(-1,-1,1,-1) \mapsto x_3$ and so on until $(1,1,1,1) \mapsto x_{16}$. Moreover, we use the log notation for binomials:
\begin{equation*}
f = \mathbf{x}^\mathbf{a}-\mathbf{x}^\mathbf{b} \longmapsto  \mathbf{a} - \mathbf{b}
\end{equation*}
when this helps in simplifying the presentation.

\begin{defn}
The {\emph support} of a binomial $f = \mathbf{x}^\mathbf{a}-\mathbf{x}^\mathbf{b}$ is the set of indices $i$
($i=1, \ldots , K$) such that $a(i) \ne 0$ or $b(i) \ne 0$. We denote the  support of $f$ with $\mathrm{supp}({f})$.
\end{defn}

\begin{defn}
An irreducible binomial $f = \mathbf{x}^\mathbf{a}-\mathbf{x}^\mathbf{b} \in {\mathcal I}(A)$ is a
{\emph circuit} if there is no other binomial $g \in {\mathcal I}(A)$ such
that ${\rm supp}({g}) \subset {\rm supp}({f})$ and ${\rm supp}({g}) \ne {\rm supp}({f})$. We
denote the set of all circuits of ${\mathcal I}(A)$ with ${\mathcal
C}(A)$.
\end{defn}

\begin{rmk}
Each column of $A$ identifies a design point, and therefore the
definition of a set of column-indices is equivalent to the
definition of the fraction with the corresponding design points.
Given ${\mathcal F}=\left\{i_1, \ldots, i_n \right\}$,
$A_{\mathcal F}$ is the sub-matrix of $A$ obtained by selecting the
columns of $A$ according to ${\mathcal F}$.
\end{rmk}

The set ${\mathcal C}(A)$ is also called the circuit basis of the matrix $A$. %, and for a toric ideal ${\mathcal I}(A)$ we have:
%\[
%{\mathcal C}(A) \subseteq {\mathcal U}(A)\, .
%\]
With a slight abuse of notation, we denote with ${\mathcal C}(A)$ also the set of the exponents, i.e.,
${\mathcal C}(A) := \{\mathbf{u} \ | \ \mathbf{x}^{\mathbf{u}+}-\mathbf{x}^{\mathbf{u}-} \ \textrm{ is a circuit }\}$
and we call such integer vectors the circuits of ${\mathcal F}$ with respect to $A$.

Notice that by construction a circuit $\mathbf{u}$ must belong to $\ker(X^t)$.

In the following proposition we collect some major properties of the circuits. The proofs can be found in \cite{sturmfels:96}.

\begin{propo}  \label{prop:varie}
\begin{enumerate}
\item Every circuit is a primitive binomial, i.e., if $\mathbf{x}^{\mathbf{u}+}-\mathbf{x}^{\mathbf{u}-}$ is a circuit, then there is no binomial $\mathbf{x}^{\mathbf{ v}+}-\mathbf{x}^{\mathbf{v}-}$ such that $\mathbf{x}^{\mathbf{v}+}$ properly divides $\mathbf{x}^{\mathbf{u}+}$ and $\mathbf{x}^{\mathbf{v}-}$ properly divides $\mathbf{x}^{\mathbf{u}-}$.

\item Every vector $\mathbf{v} \in \ker(X^t)$ can be written as a non-negative rational combination of $(n-p)$ circuits
\[
\mathbf{v} = \sum c_j (\mathbf{x}^{\mathbf{u}+_j} - \mathbf{x}^{\mathbf{u}-_j}) \qquad c_j \in {\mathbb Q}, \  \mathbf{x}^{\mathbf{u}+_j} - \mathbf{x}^{\mathbf{u}-_j} \in {\mathcal C}(X^t) \, .
\]
Each circuit in the previous decomposition is sign-compatible with $\mathbf{v}$.

\item The support of a circuit has cardinality at most $(p+1)$.
%
%\item ${\mathcal C}(X^t)$ is a subset of the Universal Gr\"obner basis.
\end{enumerate}
\end{propo}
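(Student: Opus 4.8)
The plan is to reduce all three statements to a single linear-algebraic fact about the columns of $A=X^t$ (which lie in $\mathbb{R}^p$): \emph{the support of a circuit indexes a minimally dependent set of columns of $A$}, meaning a set that is linearly dependent but all of whose proper subsets are independent. First I would establish this characterisation. If $\mathbf{u}$ is a circuit then $A\mathbf{u}=0$, so the columns indexed by $\mathrm{supp}(\mathbf{u})$ are dependent. If some proper subset were already dependent it would carry a nonzero kernel vector $\mathbf{v}$ with $\mathrm{supp}(\mathbf{v})\subsetneq\mathrm{supp}(\mathbf{u})$, contradicting the minimal-support clause of the definition of a circuit. Hence the minimality, and in turn the space of kernel vectors supported on $\mathrm{supp}(\mathbf{u})$ is \emph{exactly one-dimensional}. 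I would also note that irreducibility of the circuit binomial forces $\mathbf{u}$ to have coprime integer entries: if $\mathbf{u}=k\mathbf{w}$ with $k\geq 2$, then $\mathbf{x}^{\mathbf{u}^+}-\mathbf{x}^{\mathbf{u}^-}=\mathbf{x}^{k\mathbf{w}^+}-\mathbf{x}^{k\mathbf{w}^-}$ factors through $\mathbf{x}^{\mathbf{w}^+}-\mathbf{x}^{\mathbf{w}^-}\in\mathcal{I}(A)$, so the circuit is the primitive integer generator of that one-dimensional space.

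Statement 3 is then immediate: a minimally dependent set of columns has rank exactly one less than its cardinality, and since the columns lie in a $p$-dimensional space the rank is at most $p$, so $\#\mathrm{supp}(\mathbf{u})\leq p+1$. For statement 1 I would argue by contradiction using the one-dimensionality. Suppose a binomial $\mathbf{x}^{\mathbf{v}^+}-\mathbf{x}^{\mathbf{v}^-}\in\mathcal{I}(A)$ has $\mathbf{x}^{\mathbf{v}^+}$ properly dividing $\mathbf{x}^{\mathbf{u}^+}$ and $\mathbf{x}^{\mathbf{v}^-}$ properly dividing $\mathbf{x}^{\mathbf{u}^-}$. Proper divisibility gives $\mathbf{v}^+\leq\mathbf{u}^+$ and $\mathbf{v}^-\leq\mathbf{u}^-$ componentwise (with strict inequalities), so $\mathrm{supp}(\mathbf{v})\subseteq\mathrm{supp}(\mathbf{u})$ and $\mathbf{v}$ is sign-compatible with $\mathbf{u}$, while $\mathbf{v}\in\ker(X^t)$ is nonzero. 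By minimal support $\mathbf{v}$ cannot be supported on a strictly smaller set, so it lies in the one-dimensional kernel space and $\mathbf{v}=\lambda\mathbf{u}$; sign-compatibility gives $\lambda>0$ and proper divisibility gives $\lambda<1$. But $\mathbf{u}$ has coprime entries, so $\lambda\mathbf{u}\in\mathbb{Z}^n$ forces $\lambda\in\mathbb{Z}$, contradicting $0<\lambda<1$. Hence no such $\mathbf{v}$ exists and $\mathbf{u}$ is primitive.

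Statement 2 is the substantive one, and I expect the main difficulty there. The target is a \emph{conformal} (sign-compatible) decomposition of an arbitrary $\mathbf{v}\in\ker(X^t)$ into circuits with the sharp count $n-p=\dim\ker(X^t)$. I would proceed by a conformal version of Carath\'eodory's theorem: among the finitely many circuits that are sign-compatible with $\mathbf{v}$, repeatedly subtract a nonnegative multiple of one, chosen so as to annihilate a coordinate of the running remainder without flipping any sign, thereby strictly shrinking the support at each step while keeping the remainder in the kernel and sign-compatible with $\mathbf{v}$. The delicate points are (i) showing that at every stage a sign-compatible circuit actually sits inside the current support, which again uses the minimally-dependent-set characterisation to extract a minimal dependency from the support of the remainder; and (ii) obtaining exactly $n-p$ summands rather than a cruder bound, which I would get by carrying out the elimination inside the $(n-p)$-dimensional kernel and invoking the conformal Carath\'eodory bound for that dimension. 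Rationality of the coefficients $c_j$ is automatic, since $\mathbf{v}$ and all circuits are rational and the elimination uses only rational operations. The main obstacle is thus step (ii): coupling the support-reduction so that sign-compatibility is preserved throughout while simultaneously matching the exact kernel dimension.
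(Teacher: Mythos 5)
The paper itself contains no proof of this proposition: it states the three properties and refers to Sturmfels (1996), so your argument has to stand entirely on its own. Your treatment of parts 1 and 3 does stand: the characterisation of a circuit's support as a minimally dependent set of columns of $X^t$, the resulting one-dimensionality of the kernel vectors supported there, the rank count giving $\#\mathrm{supp}(\mathbf{u})\leq p+1$, and the primitivity-plus-scaling contradiction for part 1 are all correct (modulo routine details, e.g.\ that a binomial $\mathbf{x}^{\mathbf{v}^+}-\mathbf{x}^{\mathbf{v}^-}\in{\mathcal I}(A)$ forces $\mathbf{v}^+-\mathbf{v}^-\in\ker(X^t)$, and that a real dependency on a subset of columns can be replaced by an integer one).

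The genuine gap is in part 2, exactly at your step (i). You propose to produce a sign-compatible circuit inside the support of the running remainder $\mathbf{w}$ by ``extracting a minimal dependency from the support of the remainder.'' That is not enough: a minimal dependent subset of $\mathrm{supp}(\mathbf{w})$ yields a circuit supported inside $\mathrm{supp}(\mathbf{w})$, but neither that circuit nor its negative need be sign-compatible with $\mathbf{w}$. Concretely, take $X^t=(1\ 1\ 1)$ (so $n=3$, $p=1$) and $\mathbf{w}=(2,-1,-1)\in\ker(X^t)$: the columns $\{2,3\}$ form a minimal dependency, whose circuit $(0,1,-1)$ is incompatible with $\mathbf{w}$ under either choice of sign. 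If the elimination ever subtracts a multiple of such a circuit, the resulting decomposition is no longer conformal, and conformality is precisely the clause that distinguishes this statement from a generic spanning argument. The standard repair (Rockafellar's lemma on elementary vectors, which is also what underlies Sturmfels's proof) uses a different minimisation: among all nonzero vectors of $\ker(X^t)$ that are sign-compatible with $\mathbf{w}$ and supported inside $\mathrm{supp}(\mathbf{w})$ (the vector $\mathbf{w}$ itself is one), choose one, say $\mathbf{v}$, of minimal support, and prove that $\mathbf{v}$ is a circuit: if some $\mathbf{z}\in\ker(X^t)$ had $\mathrm{supp}(\mathbf{z})\subsetneq\mathrm{supp}(\mathbf{v})$, then for a suitable sign of $\mathbf{z}$ and the critical value of $t$ the vector $\mathbf{v}-t\mathbf{z}$ would annihilate a coordinate of $\mathbf{v}$ without flipping any sign, contradicting the minimality of $\mathrm{supp}(\mathbf{v})$. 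Once this lemma is in place your support-shrinking induction goes through; note also that the induction alone only bounds the number of summands by $\#\mathrm{supp}(\mathbf{w})$, so the sharp count $n-p$ still requires Carath\'eodory's theorem for cones applied to the cone generated by the $\mathbf{w}$-conformal circuits, whose dimension is at most $n-p$ --- invoking a ready-made ``conformal Carath\'eodory bound'' here would be circular, since that bound essentially is the statement being proved.
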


\begin{rmk}
The circuit basis of an integer matrix $A$ can be computed through several packages for symbolic computation. The computations presented in the present paper are carried out with {\tt 4ti2}, see \cite{4ti2}. {\tt 4ti2} can be used as an independent executable program or as a package of the Computer Algebra System {\tt Macaulay 2}, see \cite{M2}. For small designs the computations are performed in few seconds at most, and the circuit basis in the output can be easily analyzed.
\end{rmk}

\begin{ex}
First, we illustrate an example in some details. We consider the full factorial $2^4$ design with main effects. Writing the transposed matrix to save space, a full-rank version of the design matrix is
\begin{equation}
X^t = \begin{pmatrix}
1 & 1 & 1 & 1 & 1 & 1 & 1 & 1 & 1 & 1 & 1 & 1 & 1 & 1 & 1 & 1 \\
1 & 1 & 1 & 1 & 1 & 1 & 1 & 1 & 0 & 0 & 0 & 0 & 0 & 0 & 0 & 0 \\
1 & 1 & 1 & 1 & 0 & 0 & 0 & 0 & 1 & 1 & 1 & 1 & 0 & 0 & 0 & 0 \\
1 & 1 & 0 & 0 & 1 & 1 & 0 & 0 & 1 & 1 & 0 & 0 & 1 & 1 & 0 & 0 \\
1 & 0 & 1 & 0 & 1 & 0 & 1 & 0 & 1 & 0 & 1 & 0 & 1 & 0 & 1 & 0 \\
\end{pmatrix} \, ,
\end{equation}
where each column is a design point (lexicographically from $(-1,-1,-1,-1)$ to $(1,1,1,1)$) and the five rows are the intercept plus one parameter for each main effect. The circuits in ${\mathcal C}(\mathcal D)$ are $1,348$. More precisely, there are:
\begin{itemize}
\item[(a)] $100$ circuits with support on $4$ points: they are of the form
\[
(1, -1, -1,  1,  0,  0,  0,  0,  0 , 0 , 0 , 0 , 0 , 0 , 0,  0)
\]
(such moves are well known in Algebraic Statistics and are named as ``basic moves'' in the context of contingency table analysis).

\item[(b)] $160$ circuits with support on $5$ points: they are of the form
\[
( 1, -2,  0,  1 , 0 , 0 , 0 , 0 , 0 , 0,  0 , 0 , 0,  1 ,-1,  0)\, .
\]

\item[(c)] $1,088$ circuits with support on $6$ points: in this case there are different patterns of nonzero elements. For instance, there are $384$ circuits of the (symmetric) form
    \[
    ( 1, -2,  0 , 2,  0 , 0,  0, -1,  0,  0, -1,  0,  0,  1 , 0,  0 ) \, .
    \]
There are also asymmetric configurations such as the $16$ circuits of the form
\[
( 1,  0 , 0 , 0 , 0 , 0 , 0 ,-1 , 0 , 0,  0, -1,  0 ,-1, -1,  3  ) \, .
\]
\end{itemize}
\end{ex}

\begin{ex}
Let us consider the full factorial $2^4$ design with main effects and second order interactions. A full rank version of the design matrix has dimensions $16 \times 11$ and there are $140$ circuits: $20$ circuits with support on $8$ points; $40$ circuits with support on $10$ points; $80$ circuits with support on $12$ points.
\end{ex}

Other examples, also with multi-level and asymmetric designs, include:
\begin{itemize}
\item[(i)] Design $2^5$; model with simple factors, 2-way and 3-way interactions. There are $3,254$ circuits that can be divided into $12$ classes, up to permutations of factors or levels.

\item[(ii)] Design $2^5$; model with simple factors. The circuits are $353,616$ and they can be divided into $38$ classes.

\item[(iii)] Design $2\times 3 \times 4$; model with simple factors and 2-way interactions. There are $42$ circuits that
can be divided into $2$ classes.

\item[(iv)] Design $3\times 3 \times 4$; model with simple factors and 2-way interactions. There are $19,722$ circuits
that can be divided into $20$ classes.
\end{itemize}

Note that all the computations for the Examples above can be performed with ${\tt 4ti2}$ in less than 1 second, but the computational cost (and the number of circuits) increases fast with the dimension of the full factorial design. For instance, there are $353,616$ circuits for the full-factorial $2^5$ design with main effects, and the computation takes about 8 hours of CPU time. This makes important some properties of the circuits that we state and discuss in the next section. These properties will allow computations also for medium-sized designs, where the computations on the full factorial design are unfeasible.

A first connection between the circuits and the properties of the designs has been presented in \cite{fontana|etal:14} and concerns saturated designs.
A design ${\mathcal F}$, subset of a full factorial design ${\mathcal D}$, is a saturated design if it has minimal cardinality $\#{\mathcal D}=p$ and it allows us to estimate the model parameters. Thus, by definition the model matrix $X_{ \mathcal F}$ of a saturated design (under a full-rank parametrization) is a non-singular matrix with dimensions $p \times p$.
The following theorem replaces a linear algebra condition with a combinatorial property for checking whether a design with $p$ runs is
saturated or not.

\begin{thm}
Let $A$ be a (full-rank) full-design model matrix with dimensions $p \times K$
and let ${\mathcal C}_A=\{f_1 , \ldots, f_r\}$ be the set of its
circuits. Given a set ${\mathcal F}$ of $p$ column-indices of $A$, the
sub-matrix $A_{\mathcal F}$ is non-singular if and only if $
{\mathcal F}$ does not contain any of the supports $\mathrm{
supp}(f_1), \ldots , \mathrm{supp}(f_r)$.
\end{thm}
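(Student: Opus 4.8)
The plan is to translate the linear-algebraic condition ``$A_{\mathcal F}$ is non-singular'' into a statement about $\ker(A) = \ker(X^t)$ and then apply the circuit decomposition of Proposition~\ref{prop:varie}. Recall that $A$ is $p \times K$ of full rank $p$, and $A_{\mathcal F}$ is the square $p \times p$ sub-matrix formed by the $p$ columns indexed by $\mathcal F$. The first step is the elementary observation that $A_{\mathcal F}$ is singular if and only if its columns are linearly dependent, which in turn holds exactly when there is a nonzero vector $\mathbf{v} \in \ker(A)$ with $\mathrm{supp}(\mathbf{v}) \subseteq \mathcal F$ (extend a dependence among the selected columns by zeros on the complement of $\mathcal F$, and conversely). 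So the whole theorem reduces to the equivalence: $\ker(A)$ contains a nonzero vector supported inside $\mathcal F$ if and only if $\mathcal F$ contains the support of some circuit.

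For the ``if'' direction I would argue directly. A circuit $\mathbf{u}$ is, by construction, a nonzero element of $\ker(A)$. Hence if some circuit $f_i$ satisfies $\mathrm{supp}(f_i) \subseteq \mathcal F$, then the columns of $A$ indexed by $\mathrm{supp}(f_i)$ — and therefore those indexed by the larger set $\mathcal F$ — are linearly dependent, forcing $\det(A_{\mathcal F}) = 0$. This half requires nothing beyond the defining property that circuits lie in the kernel.

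For the converse I would take a nonzero $\mathbf{v} \in \ker(A)$ with $\mathrm{supp}(\mathbf{v}) \subseteq \mathcal F$ and invoke part~2 of Proposition~\ref{prop:varie}, writing $\mathbf{v} = \sum_j c_j\, \mathbf{u}_j$ with $c_j > 0$ and each circuit $\mathbf{u}_j$ sign-compatible with $\mathbf{v}$. Since $\mathbf{v} \ne 0$, this sum is non-empty, so at least one circuit $\mathbf{u}_j$ occurs. Sign-compatibility means that wherever a coordinate of $\mathbf{u}_j$ is nonzero it carries the same sign as the corresponding coordinate of $\mathbf{v}$, which is then itself nonzero; in particular $\mathrm{supp}(\mathbf{u}_j) \subseteq \mathrm{supp}(\mathbf{v}) \subseteq \mathcal F$. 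That single $\mathbf{u}_j$ exhibits a circuit whose support lies in $\mathcal F$, completing the equivalence.

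The substantive step is this converse, and it hinges entirely on sign-compatibility. If one decomposed a kernel vector $\mathbf{v}$ into circuits \emph{without} the sign-compatibility guarantee, cancellation between terms could produce circuits whose supports reach outside $\mathrm{supp}(\mathbf{v})$, and the support-containment conclusion would fail. It is precisely the conformality built into Proposition~\ref{prop:varie}(2) — each circuit's positive and negative parts dominated componentwise by those of $\mathbf{v}$ — that confines every participating circuit support inside $\mathrm{supp}(\mathbf{v})$. I would therefore flag the correct reading of ``sign-compatible'' as the crux, and make the implication $\mathrm{supp}(\mathbf{u}_j) \subseteq \mathrm{supp}(\mathbf{v})$ fully explicit rather than leaving it to the reader.
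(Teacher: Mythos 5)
Your proof is correct. Both directions are sound: the translation of singularity of $A_{\mathcal F}$ into the existence of a nonzero kernel vector of $A$ supported inside $\mathcal F$, the easy ``if'' direction from the fact that circuits lie in $\ker(A)$, and the converse via the sign-compatible (conformal) decomposition of Proposition~\ref{prop:varie}(2), which correctly forces $\mathrm{supp}(\mathbf{u}_j) \subseteq \mathrm{supp}(\mathbf{v}) \subseteq \mathcal F$. You are also right that sign-compatibility is the crux; without it the supports of the circuits in the decomposition need not stay inside $\mathrm{supp}(\mathbf{v})$, and the argument collapses.

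The route differs from the paper's in packaging rather than substance. The paper does not prove this theorem where it is stated (it is imported from the reference \cite{fontana|etal:14}); instead, in Section~\ref{sect:circandrob} it observes that the result drops out as a corollary of Theorem~\ref{thm:cirsub}: circuits of the sub-matrix $A_{\mathcal F}$ are exactly the restrictions of circuits of $A$ whose support lies in $\mathcal F$, so $A_{\mathcal F}$ is singular precisely when its own circuit set is nonempty, i.e.\ when $\mathcal F$ contains the support of a circuit of $A$. That route emphasizes the localization property of circuits under subset selection, which is the organizing principle of the whole paper; your route is self-contained, bypasses Theorem~\ref{thm:cirsub} entirely, and exposes the conformal decomposition as the real engine --- note that the paper's corollary route also needs it implicitly, to pass from ``nontrivial kernel'' to ``some circuit exists.'' One small point worth making explicit in your write-up: Proposition~\ref{prop:varie}(2) applies to integer (or rational) kernel vectors, so after producing a real dependence among the columns of $A_{\mathcal F}$ you should remark that, since $A$ has integer entries, the rank over $\mathbb{R}$ equals the rank over $\mathbb{Q}$, hence a nonzero \emph{rational} (and after scaling, integer) kernel vector supported in $\mathcal F$ exists. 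This is routine but closes the only loose joint in the argument.
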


\begin{ex}
Consider the $2^4$ full-factorial design and the model with simple effects and 2-way interactions.
%
%\begin{center}
%\begin{tiny}
%\begin{equation*}
%\bordermatrix{&1 &a_0 &b_0 & c_0& d_0& a_0b_0 & a_0c_0 & a_0d_0 &
%b_0c_0 & b_0d_0 & c_0d_0 \cr
%                (0,0,0,0)&1 & 1 & 1 & 1 & 1 & 1 & 1 & 1 & 1 & 1 & 1 \cr
%                (0,0,0,1)&1 & 1 & 1 & 1 & 0 & 1 & 1 & 0 & 1 & 0 & 0 \cr
%                (0,0,1,0)&1 & 1 & 1 & 0 & 1 & 1 & 0 & 1 & 0 & 1 & 0 \cr
%                (0,0,1,1)&1 & 1 & 1 & 0 & 0 & 1 & 0 & 0 & 0 & 0 & 0 \cr
%                (0,1,0,0)&1 & 1 & 0 & 1 & 1 & 0 & 1 & 1 & 0 & 0 & 1 \cr
%                (0,1,0,1)&1 & 1 & 0 & 1 & 0 & 0 & 1 & 0 & 0 & 0 & 0 \cr
%                (0,1,1,0)&1 & 1 & 0 & 0 & 1 & 0 & 0 & 1 & 0 & 0 & 0 \cr
%                (0,1,1,1)&1 & 1 & 0 & 0 & 0 & 0 & 0 & 0 & 0 & 0 & 0 \cr
%                (1,0,0,0)&1 & 0 & 1 & 1 & 1 & 0 & 0 & 0 & 1 & 1 & 1 \cr
%                (1,0,0,1)&1 & 0 & 1 & 1 & 0 & 0 & 0 & 0 & 1 & 0 & 0 \cr
%                (1,0,1,0)&1 & 0 & 1 & 0 & 1 & 0 & 0 & 0 & 0 & 1 & 0 \cr
%                (1,0,1,1)&1 & 0 & 1 & 0 & 0 & 0 & 0 & 0 & 0 & 0 & 0 \cr
%                (1,1,0,0)&1 & 0 & 0 & 1 & 1 & 0 & 0 & 0 & 0 & 0 & 1 \cr
%                (1,1,0,1)&1 & 0 & 0 & 1 & 0 & 0 & 0 & 0 & 0 & 0 & 0 \cr
%                (1,1,1,0)&1 & 0 & 0 & 0 & 1 & 0 & 0 & 0 & 0 & 0 & 0 \cr
%                (1,1,1,1)&1 & 0 & 0 & 0 & 0 & 0 & 0 & 0 & 0 & 0 & 0 \cr }
%\end{equation*}
%\end{tiny}
%\end{center}

The design matrix $X$ has rank $11$, thus we search for fractions with $11$ points. The design
\begin{equation*}
\begin{split}
{\mathcal F}_1 = \{ (0,0,0,0),
(0,0,0,1),(0,0,1,1),(0,1,0,1),(0,1,1,0),(1,0,0,1),\\ (1,0,1,0),
(1,1,0,0),(1,1,0,1),(1,1,1,0),(1,1,1,1) \}
\end{split}
\end{equation*}
is not saturated, but replacing the point $(0,1,0,1)$ with the point $(0,1,0,0)$ we obtain the saturated design
\begin{equation*}
\begin{split}
{\mathcal F}_2 = \{(0,0,0,0),
(0,0,0,1),(0,0,1,1),{(0,1,0,0)},(0,1,1,0),(1,0,0,1),\\ (1,0,1,0),
(1,1,0,0),(1,1,0,1),(1,1,1,0),(1,1,1,1) \} \, .
\end{split}
\end{equation*}
In fact, the full-factorial design matrix has $140$ circuits. They can be divided into three classes,
up to permutations of factors or levels:
\begin{itemize}
\item[(a)] 20 circuits of the form
\begin{equation*}
\mathbf{u}_1=(0,0,0,0,1,-1,-1,1,-1,1,1,-1,0,0,0,0) \, ;
\end{equation*}
\item[(b)] 40 circuits of the form
\begin{equation*}
\mathbf{u}_2=(1,-2,0,1,0,1,-1,0,0,1,-1,0,-1,0,2,-1)\, ;
\end{equation*}
\item[(c)] 80 circuits of the form
\begin{equation*}
\mathbf{u}_3=(1,0,-2,1,0,-1,1,0,-2,1,3,-2,1,0,-2,1)
\end{equation*}
\end{itemize}
and it is immediate to check that the fraction ${\mathcal F}_1$ contains the support of the circuit $\mathbf{u}_2$.
\end{ex}

\section{Robustness and circuits} \label{sect:circandrob}

In this section we study the connections between the robustness of a fraction and the circuits contained in the fraction itself. The first key result here states that the circuits are consistent with the operation of subset selection. This has two relevant consequences. First, it yields a convenient mathematical framework for design search, and we will exploit this fact in the next section, where we will define an algorithm for finding robust fractions. Second, for a given problem of subset selection, the circuit basis can be computed only once on the candidate set and this is enough to perform the analysis on all possible fractions.

Consider two fractions ${\mathcal F}_1$ and ${\mathcal F}_2$ with $k_1$ and $k_2$ design points respectively, such that ${\mathcal F}_1 \subset {\mathcal F}_2$. Without loss of generality, the matrix $X_{{\mathcal F}_2}$ can be partitioned into
\begin{equation*}
X_{{\mathcal F}_2} = \left(
\begin{array}{c}
X_{{\mathcal F}_1} \\
X_{{\mathcal F}_2 - {\mathcal F}_1}
\end{array} \right) \, .
\end{equation*}
and each vector $\mathbf{u} \in {\mathbb Z}^{k_2}$ can be written as
\begin{equation*}
\mathbf{u} = (\mathbf{u}_{{\mathcal F}_1}, \mathbf{u}_{{\mathcal F}_2-{\mathcal F}_1}) \qquad \mbox{ with } \ \mathbf{u}_{{\mathcal F}_1} \in {\mathbb Z}^{k_1}
\end{equation*}

\begin{thm} \label{thm:cirsub}
If ${\mathcal F}_1$ and ${\mathcal F}_2$ are two fractions with ${\mathcal F}_1 \subset {\mathcal F}_2$, then the circuits in ${\mathcal C}({X_{{\mathcal F}_1}})$ are
\begin{equation*}
\{ \mathbf{u}_{{\mathcal F}_1}  \ : \ {u} \in {\mathcal C}({X_{{\mathcal F}_2}}) \ \mbox{ with } \ \mathrm{supp}(\mathbf{u}) \subset {\mathcal F}_1 \} \, .
\end{equation*}
\end{thm}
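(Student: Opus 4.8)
The plan is to recast the statement, which is about circuits of the two model matrices, as a statement about their integer kernels, and then to exploit that zero-padding gives a support-preserving bijection between the two relevant kernels. Recall from the remark preceding Proposition~\ref{prop:varie} that every circuit $\mathbf{u}$ lies in $\ker(X^t)$, and that (by the definition of circuit together with item~1 of Proposition~\ref{prop:varie}) the circuits of a matrix $A$ are, up to sign and scaling, precisely the primitive integer vectors of $\ker(A)$ whose support is minimal, with respect to inclusion, among the supports of the nonzero elements of $\ker(A)$. Write $A_2 = X_{{\mathcal F}_2}^t$ and $A_1 = X_{{\mathcal F}_1}^t$; since ${\mathcal F}_1 \subset {\mathcal F}_2$, the matrix $A_1$ is exactly the column submatrix of $A_2$ indexed by ${\mathcal F}_1$, so that $\mathcal{C}(X_{{\mathcal F}_1})$ is read off from $\ker(A_1)$ and $\mathcal{C}(X_{{\mathcal F}_2})$ from $\ker(A_2)$.

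First I would record the elementary correspondence at the level of kernels. For $\mathbf{v} \in {\mathbb Z}^{k_1}$ let $\tilde{\mathbf{v}} = (\mathbf{v}, \mathbf{0}) \in {\mathbb Z}^{k_2}$ be its zero-padding on the coordinates ${\mathcal F}_2 - {\mathcal F}_1$. Using the displayed partition of $X_{{\mathcal F}_2}$ one checks directly that $A_2 \tilde{\mathbf{v}} = A_1 \mathbf{v}$, so $\mathbf{v} \mapsto \tilde{\mathbf{v}}$ is a bijection between $\ker(A_1)$ and $\{ \mathbf{u} \in \ker(A_2) : \mathrm{supp}(\mathbf{u}) \subset {\mathcal F}_1 \}$, with inverse the restriction $\mathbf{u} \mapsto \mathbf{u}_{{\mathcal F}_1}$. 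This map preserves supports (viewed inside ${\mathcal F}_1$) and preserves primitivity, since inserting or deleting zero coordinates does not change the nonzero entries and hence not their greatest common divisor.

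The key step is to show that this bijection carries circuits to circuits, i.e. to transfer the support-minimality condition between $\ker(A_1)$ and $\ker(A_2)$. The crucial observation is that minimality is \emph{local} to ${\mathcal F}_1$: if $\mathbf{u} \in \ker(A_2)$ satisfies $\mathrm{supp}(\mathbf{u}) \subset {\mathcal F}_1$ and $\mathbf{z} \in \ker(A_2)$ is nonzero with $\mathrm{supp}(\mathbf{z}) \subsetneq \mathrm{supp}(\mathbf{u})$, then automatically $\mathrm{supp}(\mathbf{z}) \subset {\mathcal F}_1$, so $\mathbf{z}$ is itself the padding of an element of $\ker(A_1)$. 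Thus a support-strictly-smaller kernel vector of $A_2$ sitting below a vector supported on ${\mathcal F}_1$ exists if and only if a support-strictly-smaller kernel vector of $A_1$ sits below its restriction. Running this in both directions gives: $\mathbf{w}$ is a circuit of $A_1$ iff $\tilde{\mathbf{w}}$ is a circuit of $A_2$ (necessarily supported on ${\mathcal F}_1$), and conversely every circuit $\mathbf{u} \in \mathcal{C}(X_{{\mathcal F}_2})$ with $\mathrm{supp}(\mathbf{u}) \subset {\mathcal F}_1$ restricts to a circuit $\mathbf{u}_{{\mathcal F}_1}$ of $A_1$; reading off this equivalence yields exactly the asserted description of $\mathcal{C}(X_{{\mathcal F}_1})$.

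I do not anticipate a serious obstacle: once the kernel reformulation of circuits is in place, the whole argument rests on the locality remark above, and the remainder is bookkeeping. The only point demanding genuine care is to keep the two minimality notions apart — support-minimality among all of $\ker(A_2)$ versus among $\ker(A_1)$ — and to verify that they coincide precisely for vectors supported on ${\mathcal F}_1$; the chain $\mathrm{supp}(\mathbf{z}) \subsetneq \mathrm{supp}(\mathbf{u}) \subset {\mathcal F}_1$ is exactly what forces this coincidence. Preservation of primitivity (item~1 of Proposition~\ref{prop:varie}) should be checked as well, but it is immediate because the bijection only inserts or removes zero coordinates.
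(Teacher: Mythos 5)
Your proposal is correct and follows essentially the same route as the paper's own proof: both rest on the zero-padding/restriction correspondence between $\ker(X_{{\mathcal F}_1}^t)$ and the vectors of $\ker(X_{{\mathcal F}_2}^t)$ supported on ${\mathcal F}_1$, with support-minimality transferred through the observation that any strictly smaller support is automatically contained in ${\mathcal F}_1$. Your write-up is somewhat more systematic (the explicit kernel bijection, the ``locality'' lemma, and the primitivity check cover in detail what the paper compresses into a contradiction argument plus an ``it is easy to see'' step), but the underlying idea is identical.
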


\begin{proof}
Let $\mathbf{u}$ be a circuit in ${\mathcal C}({X_{{\mathcal F}_2}})$ such that $\mathrm{supp}(\mathbf{u}) \subset {\mathcal F}_1$.
Then $\mathbf{u}_{{\mathcal F}_1} \in \ker(X_{{\mathcal F}_1})^t$ and it is support-minimal.
To prove this, suppose that there is an integer vector $\mathbf{v} \in \ker(X_{{\mathcal F}_1}^t)$ with $\mathrm{supp}(\mathbf{v}) \subset \mathrm{supp}(\mathbf{u})$.
This implies that the vector $\mathbf{u}'=(\mathbf{u}, \mathbf{0}_{{\mathcal F}_2-{\mathcal F}_1})$ obtained by filling with zeros the vector $\mathbf{u}$ to reach the size of ${\mathcal F}_2$ has support containing the support of $\mathbf{v}'=(\mathbf{v}, \mathbf{0}_{{\mathcal F}_2-{\mathcal F}_1})$, and thus $\mathbf{u}'$ is not a circuit in ${\mathcal C}({X_{{\mathcal F}_2}})$.

On the other side, suppose that $\mathbf{v}$ is a circuit in ${\mathcal C}({X_{{\mathcal F}_1}})$. Then it is easy to see that $\mathbf{u}=(\mathbf{v},\mathbf{0}_{{\mathcal F}_2-{\mathcal F}_1})$ is a circuit of ${\mathcal C}({X_{{\mathcal F}_1}})$ and $\mathbf{v}=\mathbf{u}_{{\mathcal F}_1}$.
\end{proof}

In order to show how to apply the previous result, let us consider again the BIBD example already introduced in Sect.~\ref{sect:definition}.

\begin{ex}[BIBD revisited] \label{ex:BIBD2}
Let us consider again the BIBD example with the candidate set ${\mathcal D}$ pictured in Fig.~\ref{fig:bibd}. The model matrix of the complete design has $1,650$ circuits but only $7$ of them have support in our candidate set ${\mathcal D}$. They are listed below:

\begin{equation*}
\begin{matrix*}[r]
    0 & 0 & 0 & 0 & 1 & -1 & 0 & -1 & 1 & 0 & 1 & -1 \\
    0 & 1 & -1 & 0 & 0 & 0 & -1 & 0 & 1 & 1 & 0 & -1 \\
    1 & -1 & 0 & -1 & 1 & 0 & 1 & -1 & 0 & 0 & 0 & 0 \\
    1 & 0 & -1 & -1 & 0 & 1 & 0 & 0 & 0 & 1 & -1 & 0 \\
    0 & 1 & -1 & 0 & -1 & 1 & -1 & 1 & 0 & 1 & -1 & 0 \\
    1 & -1 & 0 & -1 & 0 & 1 & 1 & 0 & -1 & 0 & -1 & 1 \\
    1 & 0 & -1 & -1 & 1 & 0 & 0 & -1 & 1 & 1 & 0 & -1
\end{matrix*}
\end{equation*}

Our problem is to select 10 points defining a sub-fraction with robustness as higher as possible.

Now, note that the best selection strategy would be to remove two points such that all circuits are canceled, i.e., a set of 10 points with no circuits inside. But this can not be done here. Each pair of points we choose to remove preserves at least one circuit. For instance, removing the first and the second points preserves the first circuit, removing the last two points preserves the third circuit, and so on. Thus, at least one circuit survives and the full robustness can not be reached. To choose the best subset we need to inspect with some more details the circuits. We observe that there are 4 circuits with support on 6 points and 3 circuits with support on 8 points. Since the robustness of a fraction depends on the estimability of the saturated sub-fractions, it is better to remove the small circuits as much as possible. In this case, the best solution is obtained by killing all circuits with support on 6 points. This can be done in 12 ways, obtaining 12 sub-fractions which share the same robustness, and actually the maximum achievable in this problem. Such 12 fractions are obtained by removing the following pairs of points:
\[
\{1,9\},\{1,12\},\{2,6\},\{2,11\},\{3,5\},\{3,8\},\{4,9\},\{4,12\},\{5,10\},\{6,7\},\{7,10\},\{8,10\} \, .
\]
\end{ex}

As a general rule, from the definition of robustness and the property of the circuits stated in Theorem \ref{thm:cirsub}, to obtain a robust fraction we need to remove as much circuits as possible, and in particular we need to remove the circuits with small support.

In the example below, we show that the connections between circuits and robustness are not limited to linear models, but they can be used in the general case of polynomial models, which represent the most general class of models on a finite grid of points, according to the expression in Eq.~\eqref{polyeta}. We limit the computation to the univariate case to ease the discussion of the results, but the computations can be extended to a general multivariate polynomial model.

\begin{ex}
Let us consider a (univariate) polynomial model on 7 points $\{-3,-2,-1,0,1,2,3\}$. The model matrix for a quadratic function is
\[
X^t = \begin{pmatrix*}[r]
1 & 1 & 1 & 1 & 1 & 1 & 1 \\
-3 & -2 & -1 & 0 & 1 & 2 & 3 \\
9 & 4 & 1 & 0 & 1 & 4 & 9
\end{pmatrix*} \, .
\]
The circuit basis of $X^t$ is formed by 35 circuits with support on 4 points. The 35 supports cover all possible subsets with 4 points. Thus, in this example all the sub-fractions have the same robustness.
\end{ex}

In view of the last property of the circuits stated in Proposition \ref{prop:varie}, another interesting case happens when we search for fractions with $p+1$ runs.

\begin{coro}
The supports of circuits with $p+1$ points are fully robust fractions.
\end{coro}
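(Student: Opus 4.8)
The plan is to exhibit, for a circuit $f$ whose support $S = \mathrm{supp}(f)$ has exactly $p+1$ points, that the fraction ${\mathcal F} = S$ with $n = p+1$ runs satisfies $r(X_{\mathcal F}) = 1$. Since $n - p = 1$, the sub-fractions ${\mathcal F}_p$ entering the definition of robustness arise by deleting a single point, so there are exactly ${p+1 \choose p} = p+1$ of them, namely $S \setminus \{j\}$ for $j \in S$. By the definition of robustness in Eq.~\eqref{def:robustness}, it therefore suffices to prove that every one of these $p+1$ sub-fractions is saturated; then $r(X_{\mathcal F}) = (p+1)/{p+1 \choose p} = 1$.

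First I would invoke the combinatorial characterization of saturation (the theorem above relating non-singularity of a $p \times p$ sub-matrix $A_{{\mathcal F}_p}$ to circuit supports): a size-$p$ sub-fraction is saturated if and only if it does not contain the support of any circuit of $A$. Fix $j \in S$ and consider ${\mathcal F}_p = S \setminus \{j\}$. I would argue by contradiction, supposing that ${\mathcal F}_p$ contained some circuit support, say $\mathrm{supp}(g) \subseteq S \setminus \{j\}$ for a circuit $g$. Since $j \in S$, this would force $\mathrm{supp}(g) \subseteq S \setminus \{j\} \subsetneq S = \mathrm{supp}(f)$, a \emph{strict} inclusion of supports.

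The main (and essentially only) point is to rule this out, and it is carried entirely by the defining minimality of a circuit: $f$ is an irreducible binomial in ${\mathcal I}(A)$ whose support admits no binomial $g \in {\mathcal I}(A)$ with $\mathrm{supp}(g) \subsetneq \mathrm{supp}(f)$. Hence the strict inclusion produced above is impossible, so no circuit support is contained in any ${\mathcal F}_p = S \setminus \{j\}$. By the saturation characterization each such ${\mathcal F}_p$ is saturated, and the robustness count concludes the argument. I do not expect a genuine obstacle here: the statement is exactly the extremal case of Proposition~\ref{prop:varie}(3), where the bound $|\mathrm{supp}(f)| \le p+1$ is attained, and the full-rank hypothesis on $A$ is precisely what licenses the use of the saturation theorem; the only thing to watch is that both $f$ and the hypothetical $g$ are circuits of the \emph{same} matrix $A$, so that the support-minimality of $f$ applies directly to $g$.
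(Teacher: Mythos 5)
Your proof is correct and follows essentially the same route as the paper's own argument: both rest on the support-minimality of circuits (no binomial of ${\mathcal I}(A)$ can have support strictly inside $\mathrm{supp}(f)$), combined with the circuit characterization of saturated $p$-point fractions, to conclude that every $p$-point subset of the support is saturated. Your version merely makes explicit the contradiction and the counting $(p+1)/\binom{p+1}{p}=1$ that the paper leaves implicit.
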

\begin{proof}
Since the circuits are support-minimal, if a fraction ${\mathcal F}$ with $p+1$ points is the support of a circuit, then there are no circuits with support contained in ${\mathcal F}$, and in particular all the sub-fractions of ${\mathcal F}$ with $p$ points are estimable.
\end{proof}

\begin{ex}[OA]
Consider the following Orthogonal Array $\mathcal F$ with $18$ runs of strength $2$ for the $2 \times 3^3$ design. This is the best GWLP Orthogonal Array according to the Eendebak catalogue, see \cite{eendebak:sito}. We write the transposed of the design to save space, so each row is a factor.

\begin{equation*}
\begin{matrix}
0 & 0 & 0 & 0 & 0 & 0 & 0 & 0 & 0 & 1 & 1 & 1 & 1 & 1 & 1 & 1 & 1 & 1 \\
0 & 0 & 0 & 1 & 1 & 1 & 2 & 2 & 2 & 0 & 0 & 0 & 1 & 1 & 1 & 2 & 2 & 2 \\
0 & 1 & 2 & 0 & 1 & 2 & 0 & 1 & 2 & 0 & 1 & 2 & 0 & 1 & 2 & 0 & 1 & 2 \\
0 & 1 & 2 & 1 & 2 & 0 & 2 & 0 & 1 & 1 & 2 & 0 & 2 & 0 & 1 & 0 & 1 & 2
\end{matrix}
\end{equation*}

A full-rank version of the (transposed of the) design matrix $X_{\mathcal F}$ for the main effect model on this fraction is
%\begin{tiny}
\begin{equation*}
X_{\mathcal F}^t = \begin{pmatrix}
1 & 1 & 1 & 1 & 1 & 1 & 1 & 1 & 1 & 1 & 1 & 1 & 1 & 1 & 1 & 1 & 1 & 1\\
1 & 1 & 1 & 1 & 1 & 1 & 1 & 1 & 1 & 0 & 0 & 0 & 0 & 0 & 0 & 0 & 0 & 0\\
1 & 1 & 1 & 0 & 0 & 0 & 0 & 0 & 0 & 1 & 1 & 1 & 0 & 0 & 0 & 0 & 0 & 0\\
0 & 0 & 0 & 1 & 1 & 1 & 0 & 0 & 0 & 0 & 0 & 0 & 1 & 1 & 1 & 0 & 0 & 0\\
1 & 0 & 0 & 1 & 0 & 0 & 1 & 0 & 0 & 1 & 0 & 0 & 1 & 0 & 0 & 1 & 0 & 0\\
0 & 1 & 0 & 0 & 1 & 0 & 0 & 1 & 0 & 0 & 1 & 0 & 0 & 1 & 0 & 0 & 1 & 0\\
1 & 0 & 0 & 0 & 0 & 1 & 0 & 1 & 0 & 0 & 0 & 1 & 0 & 1 & 0 & 1 & 0 & 0\\
0 & 1 & 0 & 1 & 0 & 0 & 0 & 0 & 1 & 1 & 0 & 0 & 0 & 0 & 1 & 0 & 1 & 0\\
\end{pmatrix} \, ,
\end{equation*}
%\end{tiny}

There are $591$ circuits in ${\mathcal C}({\mathcal X_{\mathcal F}})$, namely: $27$ circuits with support on $4$ points; $114$ circuits with support on $6$ points; $270$ circuits with support on $8$ points; $180$ circuits with support on $9$ points. The $180$ circuits with support on $9$ points are the sub-fractions of ${\mathcal F}$ with 9 points.
\end{ex}

Note that the computation of the circuits in the previous example is easily performed with {\tt 4ti2} in 0.07 seconds while the circuits of the corresponding full factorial design are actually unfeasible. This fact shows once more the relevance of the approach based on the circuits.

In view of the two examples discussed above, two remarks are now in order. First, the circuits with support on $p+1$ points and the circuits with support on $p$ points or less have a completely different role in finding robust fractions. While the circuits with support on $p$ points or less yield non estimable minimal fractions, and therefore they should be avoided as much as possible, the circuits with support on $p+1$ points defines fully robust fractions. Second, when a circuit with support on $p$ points or less is contained in fraction, the loss in robustness it causes is as high as small the support is. In fact, a small circuit will have impact on a large number of minimal fractions, while on the opposite side a circuit with support on $p$ points will produce only one non estimable minimal fraction. Such remarks will be useful in the next section, where an algorithm for finding robust fractions will be introduced.

Finally, we point out that the result on the estimability of saturated fractions mentioned in the previous section comes now as a corollary of Theorem \ref{thm:cirsub}. Moreover, we can state a slight generalization as follows.

\begin{propo}
Consider a fraction ${\mathcal F} \subset {\mathcal D}$ with $k>p$ points. Then the fraction ${\mathcal F}$ is estimable (i.e., the $p$ parameters are estimable) if and only if there is at least one fraction ${\mathcal F}_1$ with $p$ design points that does not contain supports of the circuits in ${\mathcal C}(X_{\mathcal D})$.
\end{propo}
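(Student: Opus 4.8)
The plan is to reduce the statement to the earlier theorem that characterizes non-singular $p\times p$ submatrices combinatorially, applied with $A=X_{\mathcal D}^t$ and circuit set $\mathcal C(X_{\mathcal D})$. The governing observation is that ``$\mathcal F$ is estimable'' means exactly that the $k\times p$ matrix $X_{\mathcal F}$ has full column rank $p$, so the whole argument amounts to translating this rank condition into the existence of a good size-$p$ subfraction.

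For the direction assuming that such an $\mathcal F_1$ exists, I would argue as follows. Since $\mathcal F_1\subset\mathcal F\subset\mathcal D$ is a set of $p$ column-indices of $A$ that contains no support of a circuit in $\mathcal C(X_{\mathcal D})$, the saturation theorem yields that the $p\times p$ matrix $X_{\mathcal F_1}$ is non-singular, hence has rank $p$. Because $X_{\mathcal F_1}$ is obtained from $X_{\mathcal F}$ by selecting the $p$ rows indexed by $\mathcal F_1$, deleting rows cannot increase rank, so $\mathrm{rank}(X_{\mathcal F})\ge \mathrm{rank}(X_{\mathcal F_1})=p$; as $X_{\mathcal F}$ has only $p$ columns this forces $\mathrm{rank}(X_{\mathcal F})=p$, and $\mathcal F$ is estimable.

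For the converse, I would start from $\mathrm{rank}(X_{\mathcal F})=p$ and use that row rank equals column rank to extract $p$ linearly independent rows of $X_{\mathcal F}$. Letting $\mathcal F_1\subset\mathcal F$ be the corresponding $p$ design points, the matrix $X_{\mathcal F_1}$ is a $p\times p$ matrix whose rows are independent and is therefore non-singular. Invoking the saturation theorem in the reverse direction then gives that $\mathcal F_1$ contains no support of any circuit in $\mathcal C(X_{\mathcal D})$, which is the desired subfraction.

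The step that requires the most care is not the linear algebra but the bookkeeping of which circuit set is in play: the theorem must be applied with the circuits of the ambient candidate set $\mathcal D$, not of $\mathcal F$. This is legitimate because $\mathcal F_1$ is literally a set of $p$ columns of $A=X_{\mathcal D}^t$, so the hypotheses of the saturation theorem are met verbatim; alternatively, Theorem \ref{thm:cirsub} confirms that the only circuits able to sit inside $\mathcal F_1$ are precisely the restrictions of circuits of $\mathcal C(X_{\mathcal D})$ whose support lies in $\mathcal F_1$. Everything else---passing between full column rank, existence of $p$ independent rows, and non-singularity of the induced square submatrix---is standard and routine.
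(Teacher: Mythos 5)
Your proof is correct and follows essentially the same route as the paper: both directions reduce to the circuit characterization of non-singular $p\times p$ submatrices (the saturation theorem of Section \ref{sect:algebraic}, which the paper obtains as a corollary of Theorem \ref{thm:cirsub}), applied with the circuits of the ambient candidate set ${\mathcal C}(X_{\mathcal D})$. The only difference is minor: where the paper's one-line proof invokes the Cauchy--Binet lemma to connect full column rank of $X_{\mathcal F}$ with the existence of a non-singular $p\times p$ submatrix, you use the equally valid elementary fact that row rank equals column rank, extracting $p$ linearly independent rows directly.
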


For the proof it is enough to apply \ref{thm:cirsub} and the Cauchy-Binet lemma.

\section{Algorithm for robust fractions} \label{sect:algo}

In this section we describe a simple algorithm for finding robust fractions of a specified size. The basic idea of the algorithm is to improve a given fraction by exchanging, for a certain number of times, the worst point of the fraction with the best point among those which are in the candidate set but not in the fraction.  This kind of algorithms is commonly used in design generation. In general, they are referred to as \emph{exchange} algorithms, see e.g. \cite{wynn1970sequential}.

\subsection{The algorithm and a simulation study}

The basic points of the algorithms come from the theory discussed in the the previous section. First, the a good fraction should avoid as much as possible the circuits with support on $p$ points or less. Second, small circuits are worse than large circuits, since they are contained in a larger number of minimal fractions, leading to a higher loss in robustness, as noticed in Example \ref{ex:BIBD2}. Thus, at each step a loss function is computed for each point $R$ of the current fraction as the number of minimal fractions becoming non-estimable when removing the point $R$. In formulae:
\begin{equation}\label{eq:app}
L(R) = \sum_{{\mathbf u}} \binom{n-\#\mathrm{supp}({\mathbf u})}{p-\#\mathrm{supp}({\mathbf u})}
\end{equation}
where the sum is taken over all the circuits $({\mathbf u})$ in the current fraction containing the point $R$. Notice that the formula in Equation \ref{eq:app} does not guarantee that the relevant minimal fractions are all distinct. The formula should be viewed as a first-order approximation of the inclusion-exclusion formula.

Therefore, the main steps of the algorithm are as follows:

\begin{enumerate}
\item Take the circuits $\mathcal{C}(X_\mathcal{D})$ of the candidate set under the given model, and a starting fraction ${\mathcal F}$ of a specified size $n$;

\item Select the circuits of $\mathcal{C}(X_\mathcal{D})$  with support on $p$ points or less. We denote this set of circuits by $\mathcal{C}^p(X_\mathcal{D})$;

\item Repeat until a {\it max-iter} number of iterations are performed:

\begin{enumerate}

\item Consider the circuits of $\mathcal{C}^p(X_\mathcal{D})$ which are contained in ${\mathcal F}$;

\item For each point $R$ in ${\mathcal F}$ compute its associated loss $L(R)$ as the (weighted) number of circuits which include $R$;

\item Take all the points with the highest loss and build up all possible pairs with one point not in ${\mathcal F}$. Make the exchange using the pair which reduces as much as possible the number of circuits contained in the fraction;

\item If no reduction is possible, then break.

\end{enumerate}

\end{enumerate}

In the simulation study below, the starting fraction ${\mathcal F}$ of a specified size $n$ is uniformly-at-random selected from all the subsets of size $n$ of the candidate set. In all the examples below the results are obtained on a sample of $1,000$ fractions.

As a first scenario, we describe the use of the algorithm on some examples where the candidate sets are full factorial designs. We consider four and five $2$-level factors with the main-effect model and two mixed-level cases. In both mixed-level cases we consider three factors, with 2,3 and 4 levels: in the first case we work with the main-effect model without interactions and in the second one with the main-effect model plus the interaction between the second and the third factor. The candidate sets are the $2^4$, the $2^5$ and the  $2 \times 3 \times 4$ full factorial designs respectively. The algorithm is used for finding robust fractions with different sizes. For each case the algorithm has been used starting from $1,000$ randomly selected fractions and using $20$ as the maximum number of iterations {\it max-iter}.  For each case in Tables \ref{std2a4}, \ref{std2a5},  \ref{std234}, and  \ref{std234plusone} some statistics concerning the robustness $r_B$ of the initial randomly selected designs and the difference $\delta$ between the final and the initial value of the robustness are reported. It is worth noting that in all but one cases the fifth percentile of $\delta$ is positive. It means that in 95$\%$ of the simulations the algorithm has been able to improve the initial design. The fifth percentile is negative for the $2^5$ design with 8 runs. In this case the twentieth percentile is positive $0.0357$  meaning the in 80$\%$ of the simulations the algorithm has been able to improve the initial design.

%\begin{table} \label{sims}
%\begin{center}
%\begin{tabular}{|c|r r r| r  r r|} \hline
%& \multicolumn{3}{|c|}{4 factors}  & \multicolumn{3}{|c|}{5 factors} \\ \hline
%$n$ & $\bar{r}_B$ & $\bar{r}_A$ & $\max r_A$ & $\bar{r}_B$ & $\bar{r}_A$ & $\max r_A$ \\ \hline
%8 & 0.6867 & 0.8214  & 0.8214 & 0.6154  & 0.8130 & 0.9643 \\
%10 & 0.6911 & 0.7619  & 0.7619 & 0.6127  & 0.7687 & 0.9048 \\
%12 & 0.6901 & 0.7222  & 0.7222 & 0.6139  & 0.7530 & 0.8171 \\
%14 & 0.6889 & 0.6933  & 0.6933 & 0.6129  & 0.7222 & 0.7626 \\ \hline
%\end{tabular}
%\caption{Mean and maximum values of the robustness obtained in the $2^k$ designs ($k=4$ and $k=5$) for various sizes of the fraction.}
%\end{center}
%\end{table}

\begin{table} \label{std2a4}
\begin{center}
\begin{tabular}{|c|r r | r r r|} \hline
& \multicolumn{5}{|c|}{4 factors} \\ \hline
$n$ & $\bar{r}_B$ & med $r_B$ & $\bar{\delta}$ & med $\delta$ &  $\delta_{0.05}$ \\ \hline
8 &	0.687 &	0.679 &	0.135 &	0.143 &		0 \\
10 &	0.691 &	0.698 &	0.071 &	0.063 &		0.016 \\
12 &	0.69 &	0.689 &	0.032 &	0.033 &		0.005 \\
14 &	0.689 &	0.693 &	0.004 &	0 &		0 \\
 \hline
\end{tabular}
\caption{Mean ($\bar{r}_B$) and median (med $r_B$) of the robustness of the randomly selected fractions. Mean ($\bar{\delta}$), median (med $\delta$), and fifth percentile ($\delta_{0.05}$) of the differences between the final and the starting value of the robustness obtained in the $2^4$ design for various sizes of the fraction.}
\end{center}
\end{table}

\begin{table} \label{std2a5}
\begin{center}
\begin{tabular}{|c|r r | r r  r|} \hline
& \multicolumn{5}{|c|}{5 factors} \\ \hline
$n$ & $\bar{r}_B$ & med $r_B$ & $\bar{\delta}$ & med $\delta$ &   $\delta_{0.05}$ \\ \hline
8 &	0.615 &	0.607 &	0.198 &	0.179 &		-0.071 \\
10 &	0.613 &	0.624 &	0.156 &	0.138 &		0 \\
12 &	0.614 &	0.621 &	0.139 &	0.135 &		0.022 \\
14 &	0.613 &	0.618 &	0.109 &	0.107 &		0.034 \\
 \hline
\end{tabular}
\caption{Mean ($\bar{r}_B$) and median (med $r_B$) of the robustness of the randomly selected fractions. Mean ($\bar{\delta}$), median (med $\delta$), and fifth percentile ($\delta_{0.05}$) of the differences between the final and the starting value of the robustness obtained in the $2^5$ design for various sizes of the fraction.}
\end{center}
\end{table}

\begin{table} \label{std234}
\begin{center}
\begin{tabular}{|c|r r | r r  r|} \hline
& \multicolumn{5}{|c|}{ 2 $\times$ 3 $\times$ 4 no interaction} \\ \hline
$n$ & $\bar{r}_B$ & med $r_B$ & $\bar{\delta}$ & med $\delta$ &  $\delta_{0.05}$ \\ \hline
14 &	0.385 &	0.393 &	0.107 &	0.103 &		0.007 \\
16 &	0.385 &	0.394 &	0.077 &	0.07 &	 0.019 \\
18 &	0.383 &	0.387 &	0.045 &	0.041 &	0.009 \\
 \hline
\end{tabular}
\caption{Mean ($\bar{r}_B$) and median (med $r_B$) of the robustness of the randomly selected fractions. Mean ($\bar{\delta}$), median (med $\delta$), and fifth percentile ($\delta_{0.05}$) of the differences between the final and the starting value of the robustness obtained in the $2 \times 3 \times 4$ design (without first-order interaction) for various sizes of the fraction.}
\end{center}
\end{table}

\begin{table} \label{std234plusone}
\begin{center}
\begin{tabular}{|c|r r | r r  r|} \hline
& \multicolumn{5}{|c|}{ 2 $\times$ 3 $\times$ 4 with interaction} \\ \hline
$n$ & $\bar{r}_B$ & med $r_B$ & $\bar{\delta}$ & med $\delta$ &   $\delta_{0.05}$ \\ \hline
14 &	0.008 &	0 &	0.277 &	0.286 &		0.286 \\
16 &	0.01 &	0 &	0.047 &	0.057 &		0 \\
18 &	0.01 &	0 &	0.012 &	0.022 &		0 \\
20 &	0.01 &	0.013 &	0.003 &	0 &	0 \\
 \hline
\end{tabular}
\caption{Mean ($\bar{r}_B$) and median (med $r_B$) of the robustness of the randomly selected fractions. Mean ($\bar{\delta}$), median (med $\delta$), and fifth percentile ($\delta_{0.05}$) of the differences between the final and the starting value of the robustness obtained in the $2 \times 3 \times 4$ design (with first-order interaction)  for various sizes of the fraction.}
\end{center}
\end{table}

In Figure \ref{a510runs} the $1,000$ pairs of robustness of the starting fraction and robustness of the final fraction are reported in the case of $5$ factors and $n=10$ as pre-specified size of the fraction.
%\begin{ex} $2^4$ main-effects, 8 runs.
%\begin{center}
%\includegraphics[width=7cm]{fig_2a4_8runs.pdf}
%\end{center}
%\end{ex}
%\begin{ex} $2^4$ main-effects, 10 runs.
%\begin{center}
%\includegraphics[width=7cm]{fig_2a4_10runs.pdf}
%\end{center}
%\end{ex}
%\begin{ex} $2^4$ main-effects, 12 runs.
%\begin{center}
%\includegraphics[width=7cm]{fig_2a4_12runs.pdf}
%\end{center}
%\end{ex}

\begin{figure}
\begin{center}
\includegraphics[width=7cm]{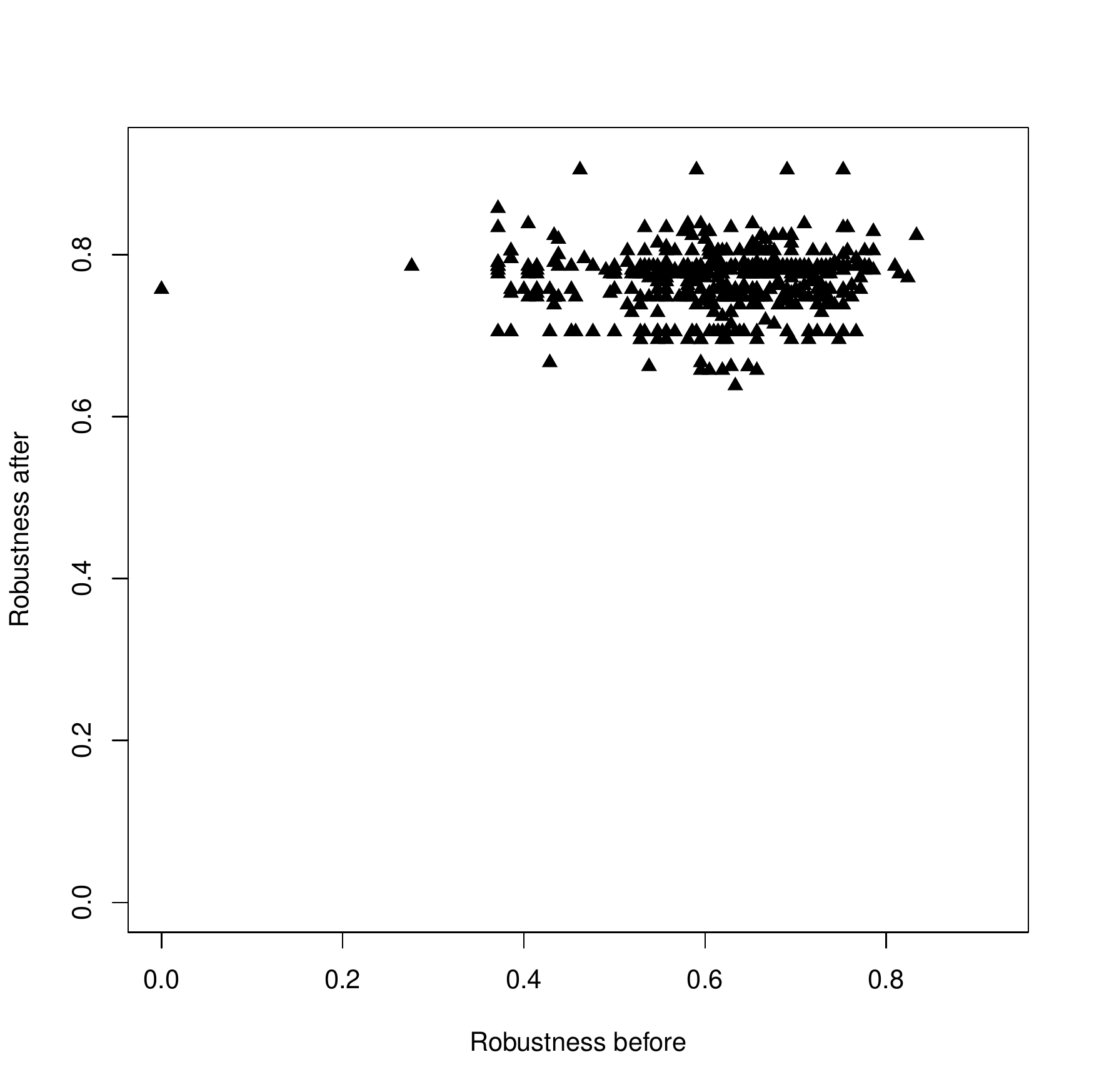}
\end{center}
\caption{Robustness of the input and the output fraction for the model $2^5$ with main effects, size $n=10$.} \label{a510runs}
\end{figure}

%\begin{table} \label{x234}
%\begin{center}
%\begin{tabular}{|c|r r r| r  r r|} \hline
%& \multicolumn{3}{|c|}{2$\times$3$\times$4 no interaction}  & \multicolumn{3}{|c|}{2$\times$3$\times$4 with interaction} \\ \hline
%$n$ & $\bar{r}_B$ & $\bar{r}_A$ & $\max r_A$ & $\bar{r}_B$ & $\bar{r}_A$ & $\max r_A$ \\ \hline
%14 & 0.3818 & 0.4482 & 0.5216 & 0.0086  & 0.2857 & 0.2857 \\
%16 & 0.3793 & 0.3985  & 0.4399 & 0.0097  & 0.0571 & 0.0571 \\
%18 & 0.3838 & 0.3854  & 0.3986 & 0.0102  & 0.0224 & 0.0224 \\
%20 & 0.3833 & 0.3853  & 0.3906 & 0.0098  & 0.0132 & 0.0132 \\ \hline
%\end{tabular}
%\caption{Mean and maximum values of the robustness  obtained in the $2 \times 3 \times 4$ design (with and without first-order interactions) for various sizes of the fraction.}
%\end{center}
%\end{table}

As a second scenario, we consider the problem of finding robust subsets of a given Orthogonal Array. In these cases, the problem on  the full-factorial design is not feasible, but the algorithm can still be used thanks to the properties of the circuits for sub-fractions discussed in Section \ref{sect:circandrob}.

To illustrate a first example in this scenario, let us consider the Orthogonal Array with strength 3 in \cite{tonchev:89}. It consist of 40 runs of a $2^{20}$ full factorial design, and it also reported in the collection neilsloane.com/oadir, see \cite{oadir}. Under the first-order model, the circuits contained in the Orthogonal Array are 190, all with support on $4$ points. We have performed a simulations by running the algorithm for finding robust fractions with 22 runs from a random starting fractions. One easily finds that in all $1,000$ replicates a fraction with $r=0.1818$ is generated, while the starting random fraction has $r=0$ in all but one cases. The same holds when finding fractions with 23 runs, where in all cases the algorithm selects a fraction with robustness $r=0.0474$. In this example all the moves are basic moves, so the simplified version of the algorithm does not modify the computations. Since the robustness of the output fractions is constant, we can recover the number of saturated fractions in each case. So, for 22 runs we have 4 saturated fractions, for 23 runs we have 12 saturated fractions, for 24 runs we have 32 saturated fractions. %(and 80 saturated fractions for 25 runs).

As another example, we move to a non-binary example. Consider the best GWLP 3 Orthogonal Array with 27 runs in the $3^4$ full factorial design, see \cite{eendebak:sito}. In this case, under the main-effect and first order interaction model, there are $58,113$ circuits in the relevant Orthogonal Array, $36,045$ of which can be discarded since their support is on 10 points. Here, only 81 moves are basic moves. In Figure \ref{OAnonbin} the $1,000$ pairs of robustness of the starting fraction and robustness of the final fraction are reported when fractions of size $n=12$ are considered. In the case $n=12$, while the mean robustness of the starting fraction is $\overline r_B= 0.3197$, the mean robustness of the final fraction is $\overline r_A=0.4567$. %with the simplified we reach 0.5760 from 0.3099... in few second....

\begin{figure}
\begin{center}
\includegraphics[width=7cm]{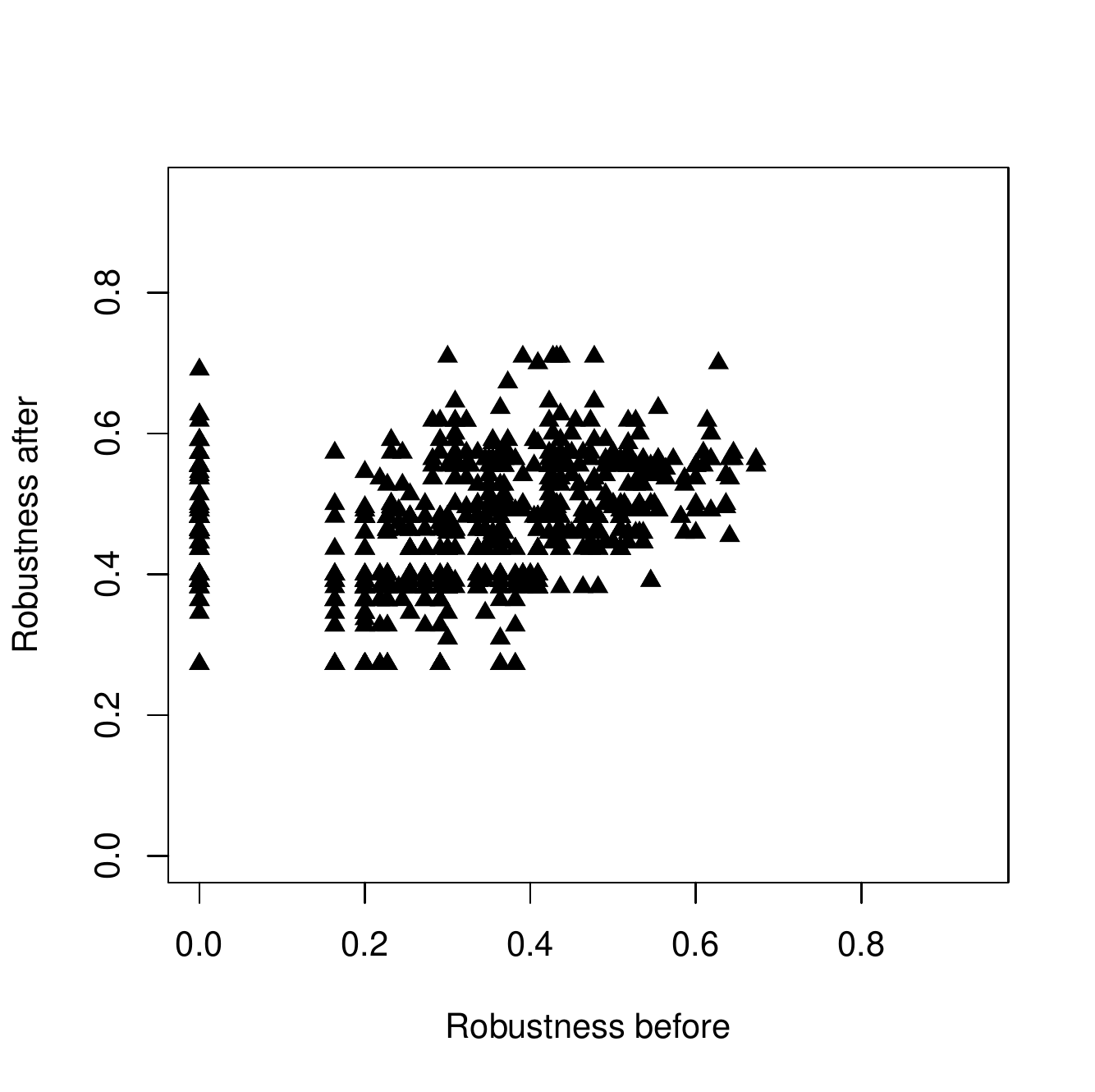}
\end{center}
\caption{Robustness of the input and the output fraction for the subsets of size $n=12$ from the best GWLP 3 Orthogonal Array with 27 runs in the $3^4$ full factorial design.} \label{OAnonbin}
\end{figure}

\subsection{Computational remarks}

As we have already noticed, the proposed algorithm is a first order approximation of the inclusion-exclusion formula. In case of ties, both for the run to be excluded and the candidate new run, the algorithm seeks for the optimal exchange, so that it stops in few steps. In all our simulations, the parameter {\it max-iter} is set to 20 but in most cases 3 or 4 iterations are enough to reach a stationary point.

A first-order approximation of the inclusion-exclusion formula can be not accurate when there is a large number of circuits contained in the proposed fraction. Indeed, the algorithm computes for each point the (weighted) number of circuits containing the point, and this number is taken as an estimate of the number of minimal fractions which would become estimable if the point is removed. However, several circuits can pertain to the same minimal fraction. For this reason, when the number of desired runs becomes large, one can consider a reduced version of the algorithm taking into account only the circuits with minimal support. The algorithm is performed as above, simply by taking the set of circuits with minimal support $\mathcal{C}^{min}(X_\mathcal{D})$ instead of $\mathcal{C}(X_\mathcal{D})$.

%\begin{enumerate}
%\item Take the circuits $\mathcal{C}(X_\mathcal{D})$ of the candidate set under the given model, and a starting fraction ${\mathcal F}$ of a specified size $n$;
%
%\item Select the circuits of $\mathcal{C}(X_\mathcal{D})$  with minimal support. We denote this set of circuits by $\mathcal{C}^{min}(X_\mathcal{D})$;
%
%\item Repeat until a {\it max-iter} number of iterations are performed:
%
%\begin{enumerate}
%
%\item Consider the circuits of $\mathcal{C}^{min}(X_\mathcal{D})$ which are contained in ${\mathcal F}$;
%
%\item For each point $R$ in ${\mathcal F}$ compute its associated loss $L(R)$ as the (weighted) number of circuits which include $R$;
%
%\item Take all the points with the highest loss and build up all possible pairs with one point not in ${\mathcal F}$. Make the exchange using the pair which reduces as much as possible the number of circuits contained in the fraction.
%
%\item if no reduction is possible, then break.
%
%\end{enumerate}
%
%\end{enumerate}

For each case in Tables \ref{sim2a4}, \ref{sim2a5},  \ref{sim234}, and  \ref{sim234plusone} some statistics concerning the robustness $r_B$ of the initial randomly selected designs and the difference $\delta$ between the final and the initial value of the robustness are reported. The reduced version of the algorithm performs extremely well.  In all cases the fifth percentile of $\delta$ is positive (or null in some cases). It means that in 95$\%$ of the simulations the reduced algorithm has been able to improve the initial design. For the $2^4$ and the $2^5$ cases, fractions with $10$ runs, the robustness before and after the reduced algorithm is plotted in Fig.~\ref{fig:10runsred} for $1,000$ randomly generated starting fractions.

\begin{table} \label{sim2a4}
\begin{center}
\begin{tabular}{|c|r r | r r  r|} \hline
& \multicolumn{5}{|c|}{4 factors} \\ \hline
$n$ & $\bar{r}_B$ & med $r_B$ & $\bar{\delta}$ & med $\delta$ &  $\delta_{0.05}$ \\ \hline
8 &	0.695 &	0.714 &	0.109 &	0.107 &		0 \\
10 &	0.688 &	0.698 &	0.063 &	0.063 &		0 \\
12 &	0.688 &	0.689 &	0.032 &	0.033 &		0.005 \\
14 &	0.689 &	0.693 &	0.005 &	0 &	0 \\
 \hline
\end{tabular}
\caption{Mean ($\bar{r}_B$) and median (med $r_B$) of the robustness of the randomly selected fractions. Mean ($\bar{\delta}$), median (med $\delta$), and fifth percentile ($\delta_{0.05}$) of the differences between the final and the starting value of the robustness obtained in the $2^4$ design for various sizes of the fraction. Reduced algorithm.}
\end{center}
\end{table}

\begin{table} \label{sim2a5}
\begin{center}
\begin{tabular}{|c|r r | r r  r|} \hline
& \multicolumn{5}{|c|}{5 factors} \\ \hline
$n$ & $\bar{r}_B$ & med $r_B$ & $\bar{\delta}$ & med $\delta$ &  $\delta_{0.05}$ \\ \hline
8 &	0.605 &	0.607 &	0.231 &	0.214 &	0 \\
10 &	0.613 &	0.624 &	0.244 &	0.229 &	0.095 \\
12 &	0.612 &	0.616 &	0.257 &	0.259 &	0.12 \\
14 &	0.613 &	0.619 &	0.147 &	0.141 &	0.075 \\
 \hline
\end{tabular}
\caption{Mean ($\bar{r}_B$) and median (med $r_B$) of the robustness of the randomly selected fractions. Mean ($\bar{\delta}$), median (med $\delta$), and fifth percentile ($\delta_{0.05}$) of the differences between the final and the starting value of the robustness obtained in the $2^5$ design for various sizes of the fraction. Reduced algorithm.}
\end{center}
\end{table}

\begin{table} \label{sim234}
\begin{center}
\begin{tabular}{|c|r r | r r  r|} \hline
& \multicolumn{5}{|c|}{ 2 $\times$ 3 $\times$ 4 no interaction} \\ \hline
$n$ & $\bar{r}_B$ & med $r_B$ & $\bar{\delta}$ & med $\delta$ &  $\delta_{0.05}$ \\ \hline
14 &	0.382 &	0.392 &	0.148 &	0.138 &		0.056 \\
16 &	0.384 &	0.389 &	0.088 &	0.081 &		0.033 \\
18 &	0.383 &	0.387 &	0.047 &	0.044 &		0.012 \\
 \hline
\end{tabular}
\caption{Mean ($\bar{r}_B$) and median (med $r_B$) of the robustness of the randomly selected fractions. Mean ($\bar{\delta}$), median (med $\delta$), and fifth percentile ($\delta_{0.05}$) of the differences between the final and the starting value of the robustness obtained in the $2 \times 3 \times 4$ design (without first-order interaction) for various sizes of the fraction. Reduced algorithm.}
\end{center}
\end{table}

\begin{table} \label{sim234plusone}
\begin{center}
\begin{tabular}{|c|r r | r r  r|} \hline
& \multicolumn{5}{|c|}{ 2 $\times$ 3 $\times$ 4 with interaction} \\ \hline
$n$ & $\bar{r}_B$ & med $r_B$ & $\bar{\delta}$ & med $\delta$ &   $\delta_{0.05}$ \\ \hline
14 &	0.011 &	0 &	0.274 &	0.286 &	0.286 \\
16 &	0.01 &	0 &	0.047 &	0.057 &	0 \\
18 &	0.01 &	0 &	0.013 &	0.022 &	0 \\
20 &	0.01 &	0.013 &	0.004 &	0 &	0 \\
 \hline
\end{tabular}
\caption{Mean ($\bar{r}_B$) and median (med $r_B$) of the robustness of the randomly selected fractions. Mean ($\bar{\delta}$), median (med $\delta$), and fifth percentile ($\delta_{0.05}$) of the differences between the final and the starting value of the robustness obtained in the $2 \times 3 \times 4$ design (with first-order interaction)  for various sizes of the fraction. Reduced algorithm.}
\end{center}
\end{table}

\begin{figure}
\begin{center}
\begin{tabular}{ccc}
\includegraphics[width=7cm]{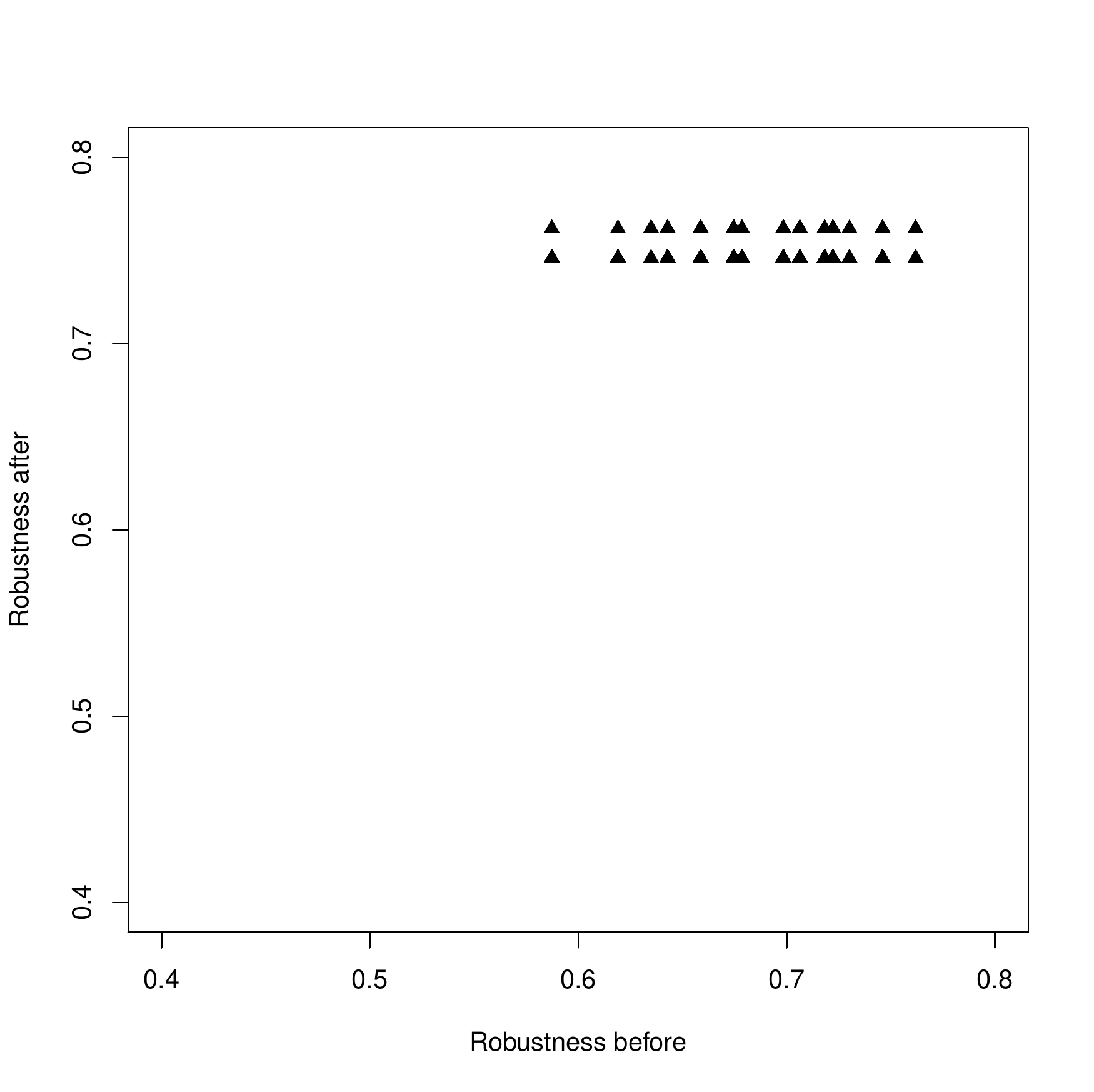}
& $ $ &
\includegraphics[width=7cm]{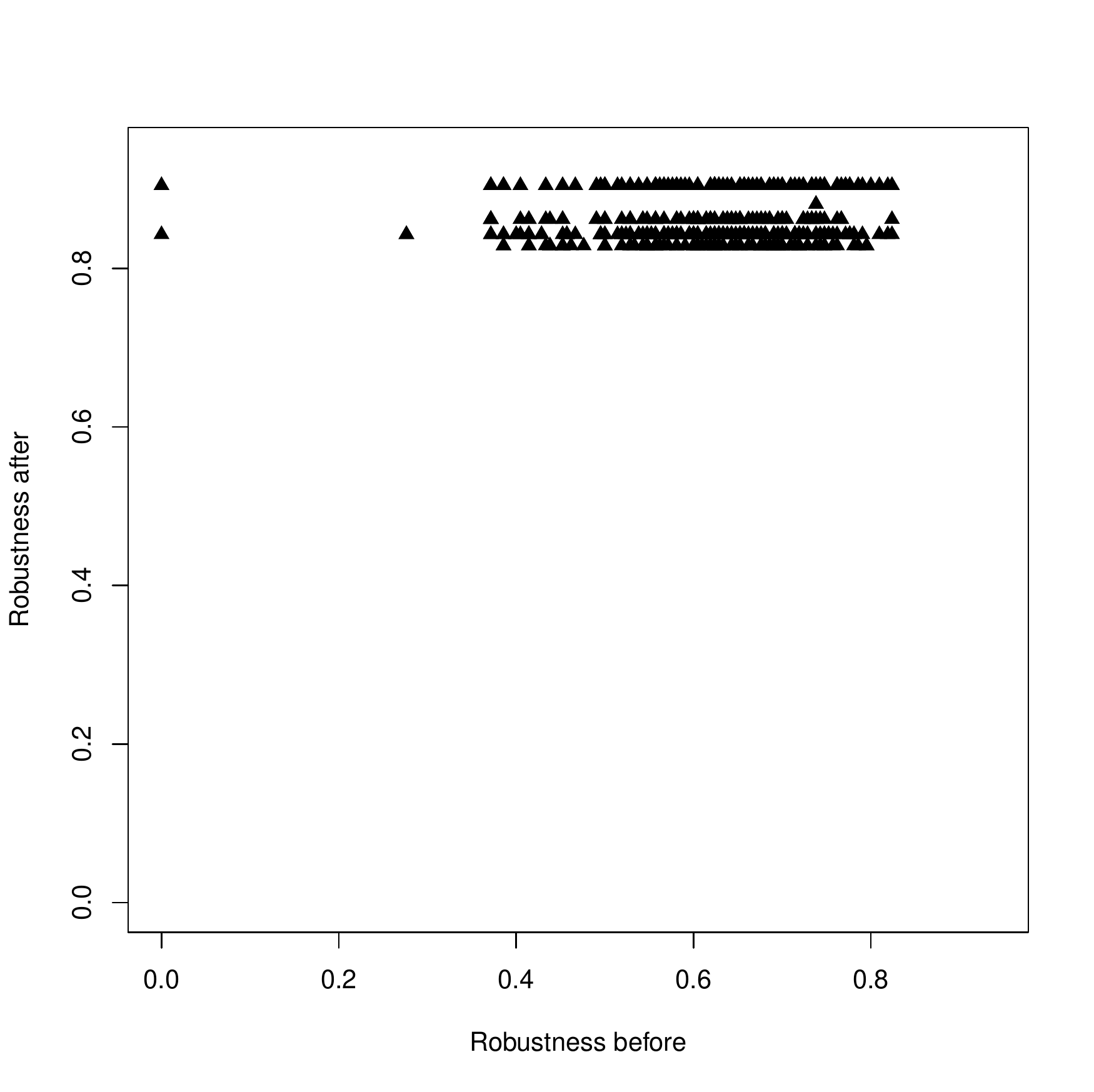}
\end{tabular}
\end{center}
\caption{Robustness of the input and the output fraction for the subsets of size $n=10$ for the $2^4$ (left) and $2^4$ (right) models with main effects. Reduced algorithm.} \label{fig:10runsred}
\end{figure}

The results here confirm that for small $n$ the reduced algorithm has worse performances, but it outperforms the complete algorithm when the design size $n$ increases. The simplified version can be used also in the case of large designs, since in most cases the circuits with minimal support can be defined theoretically, without computations.

Also, note that the algorithm does not need the computation of the robustness at each step (as for instance in a standard exchange algorithm). As a consequence, the execution time is  less than 1 sec per fraction in all the examples.

Finally, since there are several local minima, the algorithm can be put into a standard simulated annealing. We have not explicitly considered this option in our algorithm since the main aim of this paper is to highlight the connections between robustness and the geometry of the fraction studied though the circuits.

\subsection{Robustness and D-optimality}

From our previous results in Section \ref{sect:definition}, we know that robustness is equivalent to $D$-optimality in the case of totally unimodular model matrices. Thus, intuition suggests to use a $D$-optimal fraction as a starting point of our algorithm also in the general case. However, some simple simulations show that in for general model matrices $D$-optimal fractions are far from being also robust. To illustrate this feature, we show in Figure \ref{Doptvsrob} the scatterplot of the $D$-efficiency versus the robustness of a sample of fractions (including the $D$-optimal one) for the model with main effects in the $2^4$ factorial design. It is immediate to see that in both cases the $D$-optimal fraction has a low value of robustness.

\begin{figure}
\begin{center}
\begin{tabular}{ccc}
\includegraphics[width=7cm]{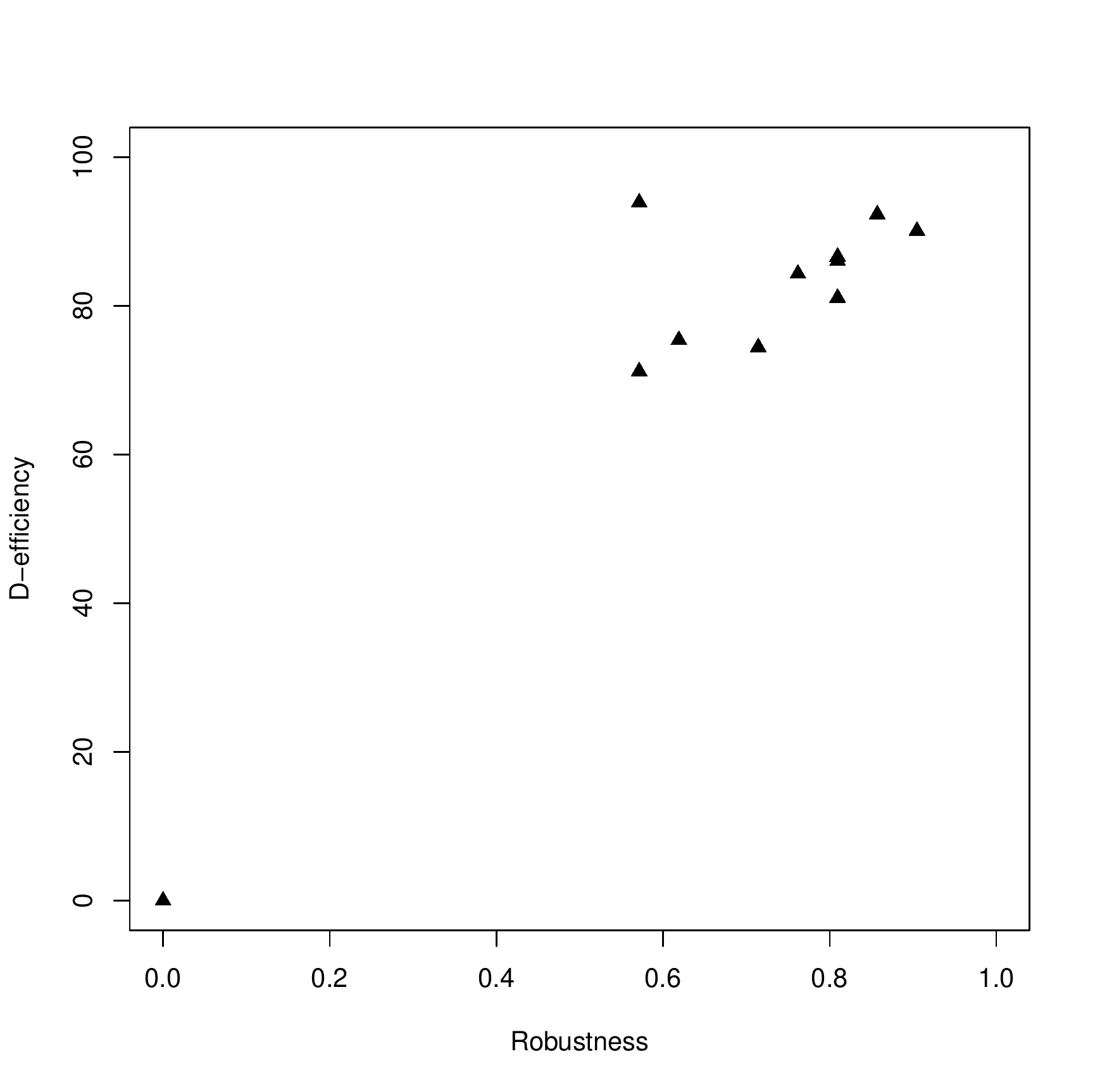} & $ $ & \includegraphics[width=7cm]{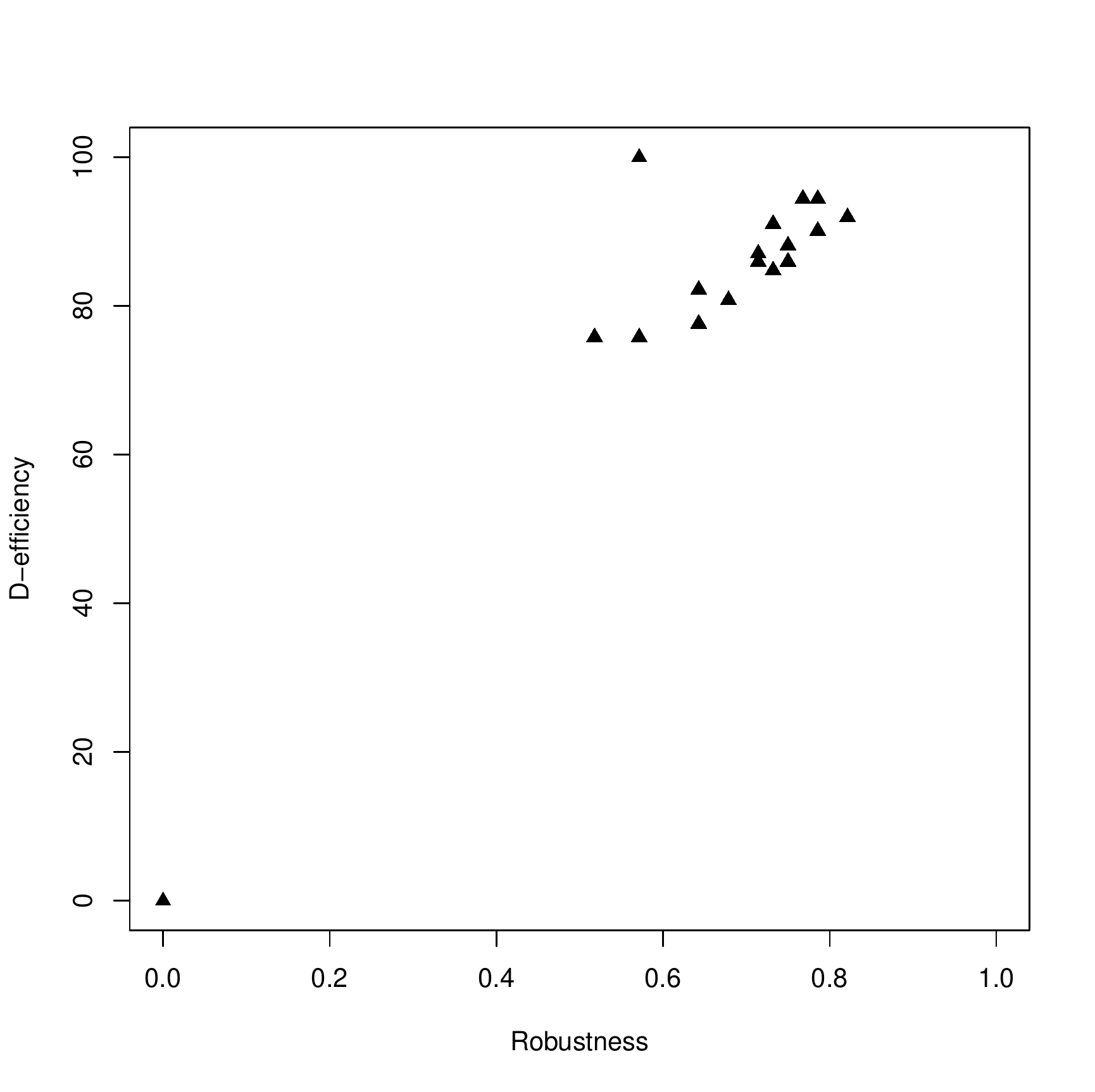}
\end{tabular}
\end{center}
\caption{$D$-efficiency versus robustness of fractions in the $2^4$ model with main effects ($n=7$ on the left, $n=8$ on the right).} \label{Doptvsrob}
\end{figure}

\section{Final remarks and the case of supersaturated designs} \label{sect:final}

In this work, we have introduced an algorithm for finding robust fractions (i.e., subsets of a candidate set of design points) using the combinatorial notion of circuit basis, and thus highlighting the geometric nature of the problem. We have shown through several examples that the proposed algorithm is effective, and can be applied also in the case of problems with moderate size, by exploiting the properties of the circuit basis. In this concluding section, we introduce the analysis of supersaturated models by means of the circuit basis. The detailed analysis of this problem needs some more theory and falls outside the scope of the present paper, but nevertheless we aim at introducing the basic facts needed to proceed in that direction.

Now, fix a fraction ${\mathcal F}$ and consider the design matrix partitioned by columns into two sub-matrices $X^{(1)}$ and $X^{(2)}$:
\begin{equation*}
X^{(2)} = \left( X^{(1)} \ | \ X^{(a)} \right) \, .
\end{equation*}

Here $X^{(2)}$ and $X^{(1)}$ are two nested models on the same fraction. In this case the connections between the circuit bases for $X^{(2)}$ and for $X^{(2)}$ are less simple, but the following result holds.

\begin{propo} \label{prop:ssd}
If $X^{(2)}$ is a model matrix, and $X^{(1)}$ is a model matrix obtained from $X^{(2)}$ by removing columns as above, then:
\begin{enumerate}
\item From $X^{(1)}$ to $X^{(2)}$:
\begin{equation*}
 \mathbf{u} \in {\mathcal C}({X^{(1)}}) \ \mbox{ s.t. } \ \mathbf{u} \in \ker(X^{(2)})  \Rightarrow  \mathbf{u} \in {\mathcal C}({X^{(2)}}) \, .
\end{equation*}

\item From $X^{(2)}$ to $X^{(1)}$:
\begin{equation*}
 \mathbf{u} \in {\mathcal C}({X^{(2)}}) \Rightarrow  \mathbf{u} \in {\mathcal C}({X^{(1)}})
\end{equation*}
or there exists $\mathbf{v} \in {\mathcal C}({X^{(1)}})$ with ${\rm supp}(\mathbf{v}) \subset {\rm supp}(\mathbf{u})$. In such a case, $\mathbf{v}^tX^{(a)} \ne 0$.
\end{enumerate}
\end{propo}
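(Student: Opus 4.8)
The plan is to reduce everything to a single inclusion of kernels together with the characterisation of circuits as the support-minimal primitive integer vectors of the relevant kernel. Writing $(X^{(2)})^t = \begin{pmatrix} (X^{(1)})^t \\ (X^{(a)})^t \end{pmatrix}$, a vector $\mathbf{u}$ satisfies $(X^{(2)})^t\mathbf{u}=0$ if and only if both $(X^{(1)})^t\mathbf{u}=0$ and $(X^{(a)})^t\mathbf{u}=0$, so that
\[
\ker((X^{(2)})^t) = \ker((X^{(1)})^t)\cap\ker((X^{(a)})^t) \subseteq \ker((X^{(1)})^t) .
\]
Here $\ker(X^{(i)})$ in the statement is read as $\ker((X^{(i)})^t)$, the space in which circuits live (recall the remark that a circuit belongs to $\ker(X^t)$). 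Throughout I would use that $\mathbf{u}\in\mathcal{C}(X)$ precisely when $\mathbf{u}$ is a primitive element of $\ker(X^t)$ whose support is minimal for inclusion among the supports of the nonzero elements of $\ker(X^t)$. Enlarging the model shrinks the kernel, and this one inclusion drives both parts.

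For part~1 I would take $\mathbf{u}\in\mathcal{C}(X^{(1)})$ with $\mathbf{u}\in\ker((X^{(2)})^t)$. Since $\mathbf{u}$ is already primitive, only support-minimality in the smaller kernel needs checking: if some nonzero $\mathbf{w}\in\ker((X^{(2)})^t)$ had $\mathrm{supp}(\mathbf{w})\subsetneq\mathrm{supp}(\mathbf{u})$, then by the inclusion $\mathbf{w}\in\ker((X^{(1)})^t)$ too, contradicting minimality of $\mathrm{supp}(\mathbf{u})$ there. Hence $\mathbf{u}\in\mathcal{C}(X^{(2)})$.

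For part~2 I would take $\mathbf{u}\in\mathcal{C}(X^{(2)})$, so $\mathbf{u}\in\ker((X^{(1)})^t)$ by the inclusion. If $\mathrm{supp}(\mathbf{u})$ is already minimal in $\ker((X^{(1)})^t)$ then $\mathbf{u}\in\mathcal{C}(X^{(1)})$, the first alternative. Otherwise some nonzero element of $\ker((X^{(1)})^t)$ is supported on a proper subset of $\mathrm{supp}(\mathbf{u})$; choosing one of minimal support among these, call it $\mathbf{v}$, and passing to its primitive representative, I would claim $\mathbf{v}\in\mathcal{C}(X^{(1)})$ with $\mathrm{supp}(\mathbf{v})\subsetneq\mathrm{supp}(\mathbf{u})$. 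The last claim $\mathbf{v}^tX^{(a)}\neq 0$ then follows by contradiction: if $\mathbf{v}^tX^{(a)}=0$, then since $\mathbf{v}^tX^{(1)}=0$ we get $\mathbf{v}^tX^{(2)} = (\mathbf{v}^tX^{(1)}\mid\mathbf{v}^tX^{(a)})=0$, so $\mathbf{v}\in\ker((X^{(2)})^t)$ is a nonzero vector with $\mathrm{supp}(\mathbf{v})\subsetneq\mathrm{supp}(\mathbf{u})$, contradicting that $\mathbf{u}$ is a circuit of $X^{(2)}$.

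The one step deserving care is the extraction of $\mathbf{v}$ in part~2, specifically verifying that it is a genuine circuit of $X^{(1)}$, i.e. support-minimal \emph{globally} and not merely among vectors supported inside $\mathrm{supp}(\mathbf{u})$. This is handled by the observation that any nonzero kernel vector with support strictly inside $\mathrm{supp}(\mathbf{v})$ would a fortiori be supported inside $\mathrm{supp}(\mathbf{u})$, contradicting the minimal choice of $\mathbf{v}$; the passage to a primitive representative is harmless since it does not change the support. Everything else is a direct consequence of the kernel inclusion, so I expect no further obstacle.
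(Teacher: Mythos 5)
Your proof is correct, and for part 1 it is the same argument as the paper's: the kernel inclusion $\ker((X^{(2)})^t)\subseteq\ker((X^{(1)})^t)$ (reading $\ker(X^{(i)})$ as $\ker((X^{(i)})^t)$, exactly as you do) together with a one-line contradiction for support-minimality --- indeed, that single sentence is the paper's \emph{entire} proof. The paper gives no argument whatsoever for part 2, so your treatment of it goes beyond the published proof rather than diverging from it: the extraction of a support-minimal nonzero $\mathbf{v}\in\ker((X^{(1)})^t)$ with $\mathrm{supp}(\mathbf{v})\subsetneq\mathrm{supp}(\mathbf{u})$, the verification that minimality among vectors supported inside $\mathrm{supp}(\mathbf{u})$ already gives global support-minimality (so that the primitive representative is a genuine circuit of $X^{(1)}$), and the deduction of $\mathbf{v}^tX^{(a)}\neq 0$ by contradiction with $\mathbf{u}$ being a circuit of $X^{(2)}$ are all sound, and this is precisely the gap a careful reader would want filled in the paper's own exposition.
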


\begin{proof}
Let $u$ be a circuit in ${\mathcal C}({\mathcal F},X_1)$ and $ u \in \ker(X_2)$. The fact that $u$ is support-minimal for $X_2$ follows immediately by contradiction.
\end{proof}

The analysis of subfractions by means of Prop.~\ref{prop:ssd} is less easy, but some aid from the circuit bases still survives. In fact, for minimal fractions with $p$ runs either $R({\mathcal F})=1$ or $R({\mathcal F})=0$. Moreover, for saturated fractions $X_p$ is non singular so ${\mathcal C}({X_p})$ is the empty set.

Now we can consider from a saturated fraction sub-fractions with $(p-1)$ runs on models with $(p-1)$ parameters, and we look at maximizing the number of non-singular $X_{(p-1)\times(p-1)}$ matrices.

For saturated fractions
\[
\dim \ker X_{p} = 0 \, .
\]
Thus, removing one parameter (column)
\[
\dim \ker X_{p\times(p-1)} = 1 \, .
\]

Therefore the generator of $\ker X_{p\times(p-1)}$ is the unique circuit.

The basic idea here is to compute the $p$ circuits for all possible $ X_{p\times(p-1)}$ matrices. The zeros in these circuits, gives an index of robustness for saturated fractions and helps in finding subfractions. In fact, following our theory, the number of zeros in these circuits corresponds exactly to the number of singular matrices $X_{(p-1)\times(p-1)}$. Let us illustrate this fact with an example.

\begin{ex}
Consider the 8-run Plackett-Burman design for the $2^7$ problem. With the usual $\pm$ notation, it is diplayed in Fig.~\ref{fig:pb}, together with the corresponding model matrix.

\begin{figure}
\begin{equation*}
\begin{pmatrix}
+ & + & + & + & + & + & + \\
+ & + & - & - & - & - & + \\
+ & - & + & + & - & - & - \\
+ & - & - & - & + & + & - \\
- & + & + & - & + & - & - \\
- & + & - & + & - & + & - \\
- & - & + & - & - & + & + \\
- & - & - & + & + & - & + \end{pmatrix}
\end{equation*}
\caption{The 8-run Plackett-Burman design.} \label{fig:pb}
\end{figure}

The 8 circuits obtained by deletion of one column from $X_{\mathcal F}$ are reported in Fig.~\ref{fig:pbris}. We see that all the entries of the 8 circuits are non-zero, and thus all 7-points sub-fractions are estimable in all models with one removed column, providing the optimality of the Plackett-Burman design in terms of robustness.

\begin{figure}
\begin{equation*}
\begin{matrix*}[r]
 1 & 1 & 1 & 1 & 1 & 1 & 1 & 1 \\
 1 & 1 & 1 & 1 & -1 & -1 & -1 & -1 \\
 1 & 1 & -1 & -1 & 1 & 1 & -1 & -1 \\
 1 & -1 & 1 & -1 & 1 & -1 & 1 & -1 \\
 1 & -1 & 1 & -1 & -1 & 1 & -1 & 1 \\
 1 & -1 & -1 & 1 & 1 & -1 & -1 & 1 \\
 1 & -1 & -1 & 1 & -1 & 1 & 1 & -1 \\
 1 & 1 & -1 & -1 & -1 & -1 & 1 & 1 \end{matrix*}
\end{equation*}
\caption{The 8 circuits obtained by deletion of one column from the 8-run Plackett-Burman design.} \label{fig:pbris}
\end{figure}
\end{ex}

This example shows that the connections between the statistical properties of a design and its geometry are not limited to the algorithm for robust fractions introduced in this paper. There are several possible new applications of the circuit basis for the study of the structure of a design, as for instance supersaturated fractions, optimal designs, and randomization for treatment allocation.

%\bibliographystyle{alpha}
%\bibliography{bibFRW}

\begin{thebibliography}{XGTM18}

\bibitem[Bai08]{bailey2008design}
Rosemary~A Bailey.
\newblock {\em Design of comparative experiments}, volume~25.
\newblock Cambridge University Press, 2008.

\bibitem[BR07]{butler07}
Neil~A. Butler and Victorino~M. Ramos.
\newblock Optimal additions to and deletions from two-level orthogonal arrays.
\newblock {\em J. R. Stat. Soc. Ser. B. Stat. Methodol.}, 69(1):51--61, 2007.

\bibitem[CLO07]{cox|etal:07}
David~A. Cox, John Little, and Donal O'Shea.
\newblock {\em Ideals, Varieties, and Algorithms: An Introduction to
  Computational Algebraic Geometry and Commutative Algebra, 3/e (Undergraduate
  Texts in Mathematics)}.
\newblock Springer-Verlag, Berlin, Heidelberg, 2007.

\bibitem[Dey93]{dee:93}
A.~Dey.
\newblock Robustness of block designs against missing data.
\newblock {\em Statistica Sinica}, 3(1):219--231, 1993.

\bibitem[DM09]{dey2009fractional}
Aloke Dey and Rahul Mukerjee.
\newblock {\em Fractional Factorial Plans}.
\newblock John Wiley \& Sons, New York, 2009.

\bibitem[ES18]{eendebak:sito}
Pieter Eendebak and Eric Schoen.
\newblock Complete series of non-isomorphic orthogonal arrays.
\newblock \texttt{http://pietereendebak.nl/oapage/}, 2018.
\newblock Accessed: 2021-03-01.

\bibitem[FR19]{fontana|rapallo:19}
Roberto Fontana and Fabio Rapallo.
\newblock On the aberrations of mixed level orthogonal arrays with removed
  runs.
\newblock {\em Statist. Papers}, 60(2):479--493, 2019.

\bibitem[FRR14]{fontana|etal:14}
Roberto Fontana, Fabio Rapallo, and Maria~Piera Rogantin.
\newblock A characterization of saturated designs for factorial experiments.
\newblock {\em J. Statist. Plann. Inference}, 147:205--211, 2014.

\bibitem[Gho79]{ghosh:79}
Subir Ghosh.
\newblock On robustness of designs against incomplete data.
\newblock {\em Sankhya: The Indian Journal of Statistics, Series B
  (1960-2002)}, 40(3/4):204--208, 1979.

\bibitem[Gho82]{ghosh:82}
Subir Ghosh.
\newblock Robustness of bibd against the unavailability of data.
\newblock {\em Journal of Statistical Planning and Inference}, 6(1):29--32,
  1982.

\bibitem[GS19]{M2}
Daniel~R. Grayson and Michael~E. Stillman.
\newblock Macaulay2, a software system for research in algebraic geometry.
\newblock Available at \url{http://www.math.uiuc.edu/Macaulay2/}, 2019.

\bibitem[HFR20]{harman:20b}
Radoslav Harman, Lenka Filov\'a, and Peter Richt\'arik.
\newblock A randomized exchange algorithm for computing optimal approximate
  designs of experiments.
\newblock {\em Journal of the American Statistical Association},
  115(529):348--361, 2020.

\bibitem[HR20]{harman:20a}
Radoslav Harman and Samuel Rosa.
\newblock On greedy heuristics for computing d-efficient saturated subsets.
\newblock {\em Operations Research Letters}, 48(2):122--129, 2020.

\bibitem[HSS12]{hedayat2012orthogonal}
A.~S. Hedayat, Neil J.~A. Sloane, and John Stufken.
\newblock {\em Orthogonal Arrays: Theory and Applications}.
\newblock Springer, New York, 2012.

\bibitem[Ins04]{sas2004sas}
SAS Institute.
\newblock {\em SAS/QC 9.1: User's Guide}.
\newblock SAS Institute, 2004.

\bibitem[Mit74]{mitchell1974computer}
Toby~J Mitchell.
\newblock Computer construction of {$D$}-optimal first-order designs.
\newblock {\em Technometrics}, 16(2):211--220, 1974.

\bibitem[MW07]{mukerjee2007modern}
Rahul Mukerjee and C.~F.~Jeff Wu.
\newblock {\em A Modern Theory of Factorial Design}.
\newblock Springer, New York, 2007.

\bibitem[PRW01]{pistone|etal:01}
Giovanni Pistone, Eva Riccomagno, and Henry~P. Wynn.
\newblock {\em Algebraic Statistics. Computational Commutative Algebra in
  Statistics}.
\newblock Chapman and Hall/CRC, New York, 2001.

\bibitem[SB18]{street}
Deborah~J. Street and Emily~M. Bird.
\newblock ${D}$-optimal orthogonal array minus $t$ run designs.
\newblock {\em J. Stat. Theory Practice}, 12(3):575--594, 2018.

\bibitem[Sch86]{schrijver:86}
Alexander Schrijver.
\newblock {\em Theory of linear and integer programming}.
\newblock Wiley-Interscience Series in Discrete Mathematics. John Wiley \& Sons
  Ltd., Chichester, 1986.
\newblock A Wiley-Interscience Publication.

\bibitem[Slo20]{oadir}
Neil~J.A. Sloane.
\newblock A library of {O}rthogonal {A}rrays.
\newblock \texttt{http://neilsloane.com/oadir/}, 2020.
\newblock Accessed: 2021-03-01.

\bibitem[Stu96]{sturmfels:96}
Bernd Sturmfels.
\newblock {\em Gr\"obner bases and convex polytopes}, volume~8 of {\em
  University Lecture Series}.
\newblock American Mathematical Society, Providence, RI, 1996.

\bibitem[Ton89]{tonchev:89}
Vladimir~D. Tonchev.
\newblock Self-orthogonal designs and extremal doubly even codes.
\newblock {\em J. Comb. Theory, Ser. {A}}, 52(2):197--205, 1989.

\bibitem[tt18]{4ti2}
4ti2 team.
\newblock 4ti2---a software package for algebraic, geometric and combinatorial
  problems on linear spaces.
\newblock \texttt{https://4ti2.github.io}, 2018.

\bibitem[Wyn70]{wynn1970sequential}
Henry~P Wynn.
\newblock The sequential generation of {D}-optimum experimental designs.
\newblock {\em The Annals of Mathematical Statistics}, pages 1655--1664, 1970.

\bibitem[XGTM18]{xampeny}
Rafel Xampeny, Pere Grima, and Xavier Tort-Martorell.
\newblock Which runs to skip in two-level factorial designs when not all can be
  performed.
\newblock {\em Quality Engineering}, 2018.

\end{thebibliography}

\end{document}